\begin{document}

\title{Shaping Social Activity by Incentivizing Users}

\author[1]{Mehrdad Farajtabar}
\author[1]{Nan Du}
\author[2]{Manuel Gomez Rodriguez}
\author[3]{Isabel Valera}
\author[1]{Hongyuan Zha}
\author[1]{Le Song}

\affil[1]{Georgia Institute of Technology, \{mehrdad,dunan\}@gatech.edu, \{zha,lsong\}@cc.gatech.edu}
\affil[2]{Max Plank Institute for Intelligent Systems, manuelgr@tuebingen.mpg.de}
\affil[3]{University Carlos III in Madrid, ivalera@tsc.uc3m.es}

\date{}

\begin{small}
\maketitle
\end{small}


\begin{abstract}
Events in an online social network can be categorized roughly into \emph{endogenous} events,
where users just respond to the actions of their neighbors within the network, or \emph{exogenous} events, where users take actions due to drives external to the network.
How much external drive should be provided to each user, such that the network activity can be steered towards a target state?
In this paper, we model social events using multivariate Hawkes processes, which can capture both endogenous and exogenous event intensities, and derive a time dependent linear relation between the intensity of
exogenous events and the overall network activity.
Exploiting this connection, we develop a convex optimization framework for determining the required level of external drive in order for the network to reach a
desired activity level.
We experimented with event data gathered from Twitter, and show that our method can steer the activity of the network more accurately than alternatives.
\end{abstract}

\section{Introduction}
\label{sec:intro}

\setlength{\abovedisplayskip}{3pt}
\setlength{\abovedisplayshortskip}{1pt}
\setlength{\belowdisplayskip}{3pt}
\setlength{\belowdisplayshortskip}{1pt}
\setlength{\jot}{2pt}

\setlength{\floatsep}{2ex}
\setlength{\textfloatsep}{2ex}

Online social platforms routinely track and record a large volume of event data, which may correspond to the usage of a service (\eg, url shortening service, bit.ly).
These events can be categorized roughly into \emph{endogenous} events,
where users just respond to the actions of their neighbors within the network, or \emph{exogenous} events, where users take actions due to drives external to the network.
For instance, a user's tweets may contain links provided by bit.ly, either due to his forwarding of a link from his friends, or due to his own initiative to use the service to create a new link.

Can we model and exploit these data to steer the online community to a desired activity level?
Specifically, can we drive the overall usage of a service to a certain level (\eg, at least twice per day per user) by incentivizing a small number of users to take more initiatives?
What if the goal is to make the usage level of a service more homogeneous across users?
What about maximizing the overall service usage for a target group of users?
Furthermore, these \emph{activity shaping} problems need to be addressed by taking into account budget constraints, since incentives are usually provided in the form of monetary or credit rewards.

Activity shaping problems are significantly more challenging than traditional influence maximization problems, which aim to identify a set of
users, who, when convinced to adopt a product, shall influence others in the network and trigger a large cascade of adoptions~\cite{KemKleTar03, RicDom02}.
First, in influence maximization, the state of each user is often assumed to be binary, either adopting a product or not~\cite{KemKleTar03, CheWanYan09, RodSch12, DuSonRodZha13}.
However, such assumption does not capture the recurrent nature of product usage, where the frequency of the usage matters.
Second, while influence maximization methods identify a set of users to provide incentives, they do not typically provide a quantitative prescription on how
much incentive should be provided to each user.
Third, activity shaping concerns about a larger variety of target states, such as minimum activity requirement and homogeneity of activity, not just activity maximization.

In this paper, we will address the activity shaping problems using multivariate Hawkes processes~\cite{Liniger2009}, which can model both endogenous and exogenous recurrent social events, and were shown to be a good fit for such data in a number of recent works (\eg,~\cite{BluBecHelKat12, ZhoZhaSon13, ZhoZhaSon13b, IwaShaGha13, LinAdaRya14, ValGomGum14}).
More importantly, we will go beyond model fitting, and derive a novel predictive formula for the overall network activity given the intensity of exogenous events in individual users, using a connection between the processes and branching processes~\cite{DobCarBenNew04, Rasmussen13, VeeSch08, ZuaOgaVer02}.
%
%
Based on this relation, we propose a convex optimization framework to address a diverse range of activity shaping problems given budget constraints. Compared to previous methods for influence maximization, our framework can provide more fine-grained control of network activity, not only steering the network to a desired steady-state activity level but also do so in a time-sensitive fashion. For example, our framework allows us to answer complex time-sensitive queries, such as, which users should be incentivized, and by
how much, to steer a set of users to use a product twice per week after one month?

In addition to the novel framework, we also develop an efficient gradient based optimization algorithm, where the matrix exponential needed for gradient computation is approximated using the truncated Taylor series expansion~\cite{AlmHig11}. This algorithm allows us to validate our framework in a variety of activity shaping tasks and scale up to networks with tens of thousands of nodes. We also conducted experiments on a network of 60,000 Twitter users and more than 7,500,000 uses of a popular url shortening service. Using held-out data, we show that our algorithm can shape the network behavior much more accurately.

\section{Modeling Endogenous-Exogenous Recurrent Social Events}
\label{sec:formulation}
%
We model the events generated by $m$ users in a social network as a $m$-dimensional counting process $\Nb(t) = (N_1(t), N_2(t), \ldots, N_m(t))^\top$, where $N_i(t)$ records the total
number of events generated by user $i$ up to time $t$.
Furthermore, we represent each event as a tuple $(u_i, t_i)$, where $u_i$ is the user identity and $t_i$ is the event timing.
Let the history of the process up to time $t$ be $\Hcal_t:=\cbr{(u_i,t_i)\, |\, t_i \leqslant t}$, and $\Hcal_{t-}$ be the history until just before time $t$.
Then the increment of the process, $d\Nb(t)$, in an infinitesimal window $[t,t+dt]$ is parametrized by the intensity $\lambdab(t)=(\lambda_1(t), \ldots, \lambda_m(t))^\top\geqslant 0$,~\ie,
\begin{align}
  \EE[d\Nb(t) | \Hcal_{t-}] = \lambdab(t) \, dt.
\end{align}
Intuitively, the larger the intensity $\lambdab(t)$, the greater the likelihood of observing an event in the time window $[t, t+d t]$.
For instance, a Poisson process in $[0,\infty)$ can be viewed as a special counting process with a constant intensity function $\lambdab$, independent of time and history. To model the presence of both endogenous and exogenous events, we will decompose the intensity into two terms
\begin{align}
  \underbrace{\lambdab(t)}_{\text{overall event intensity}} \quad =\quad \underbrace{\lambdab^{(0)}(t)}_{\text{exogenous event intensity}} \quad +\quad  \underbrace{\lambdab^{\ast}(t)}_{\text{endogenous event intensity}},
\end{align}
where the exogenous event intensity models drive outside the network, and the endogenous event intensity models interactions within the network. We assume that hosts of social platforms can potentially drive up or down the exogenous events intensity by providing incentives to users; while endogenous events are generated due to users' own interests or under the influence of network peers, and the hosts do not interfere with them directly. The key questions in the activity shaping context are how to model the endogenous event intensity which are realistic to recurrent social interactions, and how to link the exogenous event intensity to the endogenous event intensity. We assume that the exogenous event intensity is independent of the history and time,~\ie,~$\lambdab^{(0)}(t) = \lambdab^{(0)}$.

\subsection{Multivariate Hawkes Process}
%
Recurrent endogenous events often exhibit the characteristics of self-excitation, where a user tends to repeat what he has been doing recently, and mutual-excitation, where a user simply follows what his neighbors are doing due to peer pressure. These social phenomena have been made analogy to the occurrence of earthquake~\cite{MarLen08} and the spread of epidemics~\cite{YanZha13}, and can be well-captured by multivariate Hawkes processes~\cite{Liniger2009} as shown in a number of recent works (\eg,~\cite{BluBecHelKat12, ZhoZhaSon13, ZhoZhaSon13b, IwaShaGha13, LinAdaRya14,ValGomGum14}).

More specifically, a multivariate Hawkes process is a counting process who has a particular form of intensity. More specifically, we assume that the strength of influence between users is parameterized by a sparse nonnegative \emph{influence matrix}
$\Ab = (a_{uu'})_{u,u' \in [m]}$, where $a_{uu'}> 0$ means user $u'$ directly excites user $u$.  We also allow $\Ab$ to have nonnegative diagonals to model self-excitation of a user.
Then, the intensity of the $u$-th
dimension is
\begin{align}
  \lambda_{u}^{*}(t)= \sum\nolimits_{i: t_i < t} a_{uu_i}\, g(t-t_i) = \sum\nolimits_{u' \in [m]} a_{uu'} \int_{0}^t g(t-s) \, dN_{u'}(s),
  \label{eq:hawkes-intensity}
\end{align}
where $g(s)$ is a nonnegative kernel function such that $g(s)=0$ for $s\leq 0$ and $\int_{0}^{\infty} g(s)\,ds < \infty$; the second equality is obtained by grouping events according to users and use the fact that $\int_{0}^t g(t-s) \, dN_{u'}(s)=\sum_{u_i=u',t_i<t} g(t-t_i)$.
Intuitively, $\lambda_{u}^{*}(t)$ models the propagation of peer influence over the network --- each event $(u_i,t_i)$ occurred in the neighbor of
a user will boost her intensity by a certain amount which itself decays over time. Thus, the more frequent the events occur in the user'{}s neighbor, the more
likely she will be persuaded to generate a new event.

For simplicity, we will focus on an exponential kernel, $g(t-t_i) =  \exp(-\omega (t-t_i))$ in the reminder of the paper. However, multivariate Hawkes processes and the branching processed explained in next section is independent of the kernel choice and can be extended to other kernels such as power-law, Rayleigh or any other long
tailed distribution over nonnegative real domain.
Furthermore, we can rewrite equation~\eq{eq:hawkes-intensity} in vectorial format
\begin{align}
  \lambdab^{*}(t) = \int_{0}^{t} \Gb(t-s)\, d \Nb(s),
  \label{eq:hawkes-intensity-conv}
\end{align}
by defining a $m\times m$ time-varying matrix $\Gb(t) = (a_{uu'} g(t))_{u, u'\in [m]}$.
Note that, for multivariate Hawkes processes, the intensity, $\lambdab(t)$, itself is a random quantity, which depends on the history $\Hcal_t$. We denote the expectation of the intensity with respect to history as
\begin{align}
  \label{eq:expected_intensity}
  \mub(t) := \EE_{\Hcal_{t-}}\sbr{\lambdab(t)}
\end{align}


\begin{figure*} [!t]
        \centering
        \begin{tabular}{cc}
          \includegraphics[width=0.16\textwidth]{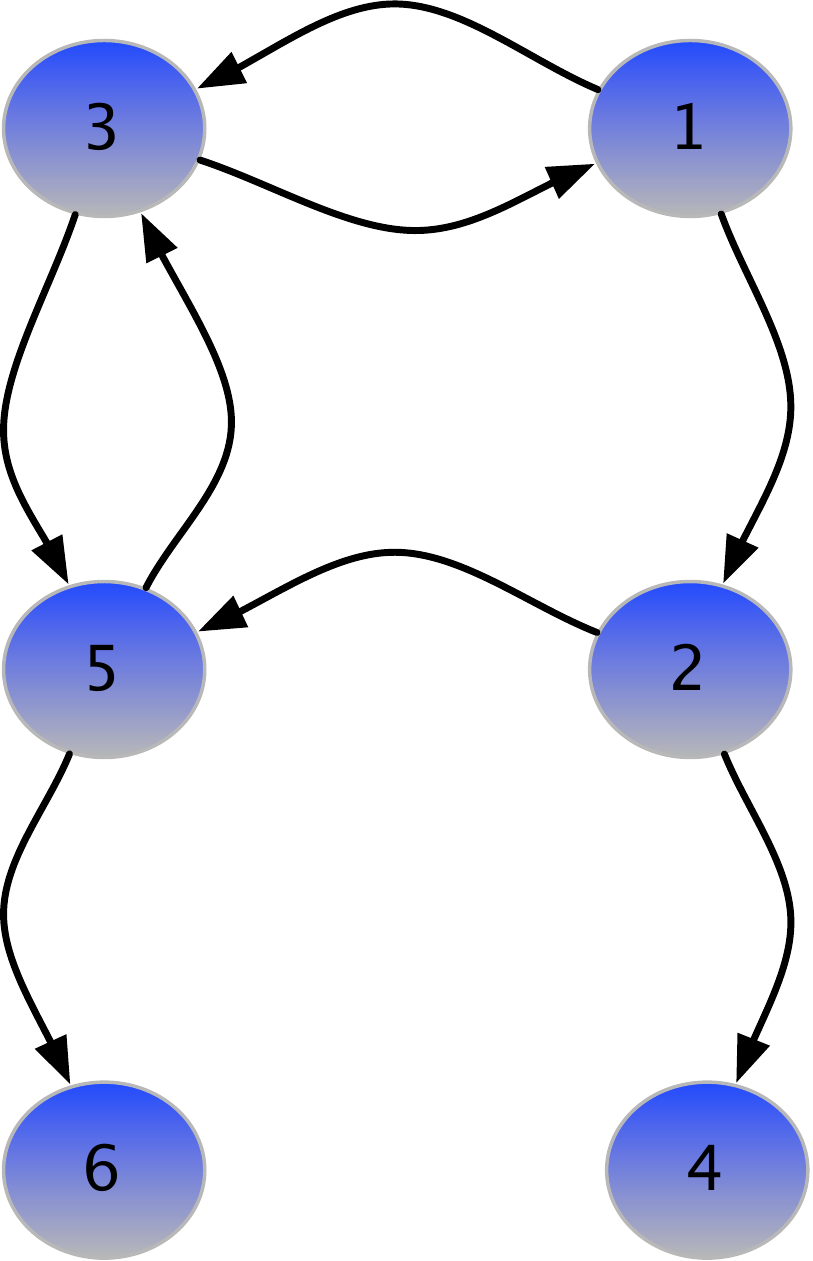} &
          \includegraphics[width=0.45\textwidth]{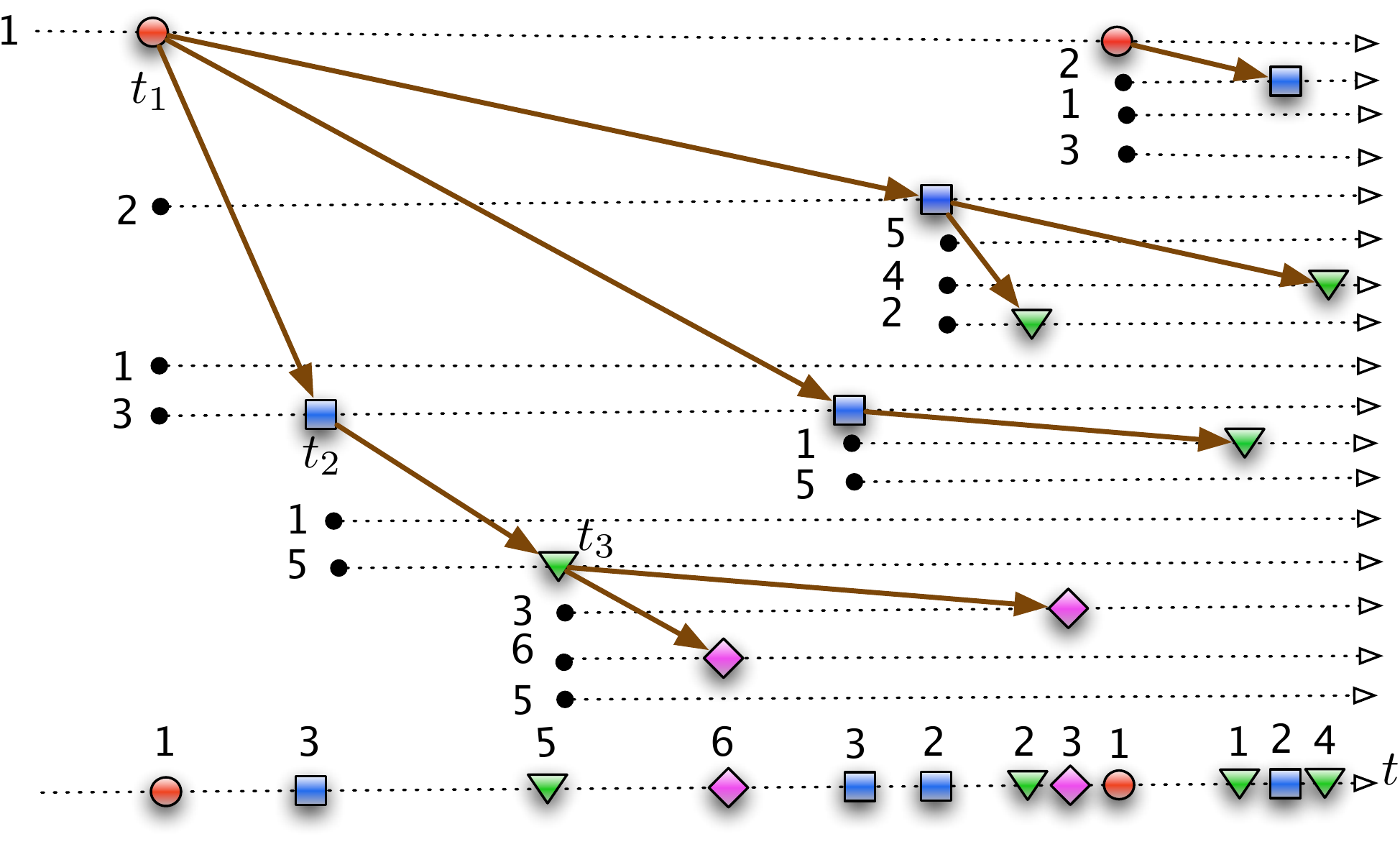} \\
          (a) An example social network & (b) Branching structure of events
        \end{tabular}
        \caption{(a) an example social network where each directed edge indicates that the target node \emph{follows}, and can be influenced by, the source node. The activity in this network is modeled using Hawkes processes, which result in branching structure of events in (b).
        Each exogenous event is the root node of a branch (\eg,~top left most red circle at $t_1$), and it occurs due to a user'{}s own initiative; and each event can trigger
        one or more endogenous events (blue square at $t_2$).
        The new endogenous events can create the next generation of endogenous events (green triangles at $t_3$), and so forth.
        The social network in (a) will constrain the branching structure of events in (b), since an event produced by a user (\eg,
        user $1$) can only trigger endogenous events in the same user or one or more of her followers (\eg, user $2$ or user $3$).
        }
        \label{fig:branching}
\end{figure*}

\subsection{Connection to Branching Processes}
\label{sec:branching}

A branching process is a Markov process that models a population in which each individual in generation $k$ produces some random number of individuals in generation $k + 1$, according some distribution~\cite{Harris02}. In this section, we will conceptually assign both exogenous events and endogenous events in the multivariate Hawkes process to levels (or generations), and associate these events with a branching structure which records the information on which event triggers which other events (see Figure~\ref{fig:branching} for an example). Note that this genealogy of events should be interpreted in probabilistic terms and may not be observed in actual data. Such connection has been discussed in Hawkes' original paper on one dimensional Hawkes processes~\cite{Hawkes71}, and it has recently been revisited in the context of multivariate Hawkes processes by~\cite{LinAdaRya14}.
%
%
The branching structure will play
a crucial role in deriving a novel link between the intensity of the exogenous events and the overall network activity.

More specifically, we assign all exogenous events to the zero-th generation, and record the number of such events as $\Nb^{(0)}(t)$. These exogenous events will trigger the first generation of endogenous events whose number will be recorded as $\Nb^{(1)}(t)$. Next these first generation of endogenous events will further trigger a second generation of endogenous events $\Nb^{(2)}(t)$, and so on and so forth. Then the total number of events in the network is the sum of the numbers of events from all generations
\begin{equation}
\label{eq:sum_point_process}
\Nb(t) = \Nb^{(0)}(t) +  \Nb^{(1)}(t) + \Nb^{(2)}(t) + \ldots
\end{equation}
Furthermore, denote all events in generation $k-1$ as $\Hcal_t^{(k-1)}$. Then, independently for each event $(u_i,t_i) \in \Hcal_t^{(k-1)}$ in generation $k-1$, it triggers a Poisson process in its neighbor $u$ independently with intensity $a_{uu_i} g(t-t_i)$. Due to the additivity of independent Poisson processes~\cite{Kingman92}, the intensity, $\lambda_u^{(k)}(t)$, of events at node $u$ and generation $k$ is simply the sum of conditional intensities of the Poisson processes triggered by all its neighbors,~\ie,
$
  \lambda_u^{(k)}(t) = \sum_{(u_i,t_i) \in \Hcal_t^{(k-1)}} a_{uu_i} g(t-t_i) = \sum_{u' \in [m]} \int_0^t g(t-s)\, d \Nb_{u'}^{(k-1)}(s).
$
Concatenate the intensity for all $u\in [m]$, and use the time-varying matrix $\Gb(t)$~\eq{eq:hawkes-intensity-conv}, we have
\begin{equation}
  \label{eq:sum_instant_intensity}
  \lambdab^{(k)}(t) = \int_{0}^t \Gb(t-s)\, d\Nb^{(k-1)}(s),
\end{equation}
where $\lambdab^{(k)}(t) = (\lambda_1^{(k)}(t),\ldots,\lambda_m^{(k)}(t))^\top$ is the intensity for counting process $\Nb^{(k)}(t)$ at $k$-th generation.
Again, due to the additivity of independent Poisson processes, we can decompose the intensity of $\Nb(t)$ into a sum of conditional intensities from different generation
\begin{equation}
  \label{eq:lambda_dec}
  \lambdab(t) = \lambdab^{(0)}(t) + \lambdab^{(1)}(t) + \lambdab^{(2)}(t)+\ldots
\end{equation}
Next, based on the above decomposition, we will develop a closed form relation between the expected intensity $\mub(t) = \EE_{\Hcal_{t-}}\sbr{\lambdab(t)}$ and the intensity, $\lambdab^{(0)}(t)$, of the exogenous events. This relation will form the basis of our activity shaping framework.

\section{Linking Exogenous Event Intensity to Overall Network Activity}

Our strategy is to first link the expected intensity $\mub^{(k)}(t):=\EE_{\Hcal_{t-}}[\lambdab^{(k)}(t)]$ of events at the $k$-th generation with $\lambdab^{(0)}(t)$, and then derive a close form for the infinite series sum
\begin{equation}
  \label{eq:mu_dec}
  \mub(t) = \mub^{(0)}(t) + \mub^{(1)}(t) + \mub^{(2)}(t)+\ldots
\end{equation}
Define a series of auto-convolution matrices, one for each generation, with $\Gb^{(\star 0)}(t) = \Ib$ and
\begin{equation}
\Gb^{(\star k)}(t) = \int_{0}^t \Gb(t-s)\, \Gb^{(\star k-1)}(s)\, ds = \Gb(t) \star \Gb^{(\star k-1)}(t)
\end{equation}
Then the expected intensity of events at the $k$-th generation is related to exogenous intensity $\lambdab^{(0)}$ by
\begin{lemma}
  \label{lem:br_intensity}
  $\mub^{(k)}(t) = \Gb^{(\star k)}(t)\,  \lambdab^{(0)}$.
\end{lemma}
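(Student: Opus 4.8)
The plan is to prove the identity by induction on the generation index $k$, using the recursion \eqref{eq:sum_instant_intensity} for $\lambdab^{(k)}(t)$ together with the tower property of conditional expectation. For the base case $k=0$, I would simply invoke the modelling assumption that the exogenous intensity is constant and history-independent, $\lambdab^{(0)}(t)=\lambdab^{(0)}$, so that $\mub^{(0)}(t)=\EE_{\Hcal_{t-}}\sbr{\lambdab^{(0)}}=\lambdab^{(0)}$; since $\Gb^{(\star 0)}(t)=\Ib$ by definition, this gives $\mub^{(0)}(t)=\Gb^{(\star 0)}(t)\,\lambdab^{(0)}$ as required.

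For the inductive step I would assume $\mub^{(k-1)}(s)=\Gb^{(\star k-1)}(s)\,\lambdab^{(0)}$ for all $s$ and start from \eqref{eq:sum_instant_intensity}, taking the expectation over the history to obtain
\begin{align*}
\mub^{(k)}(t) = \EE_{\Hcal_{t-}}\sbr{\int_0^t \Gb(t-s)\, d\Nb^{(k-1)}(s)} = \int_0^t \Gb(t-s)\, \EE\sbr{d\Nb^{(k-1)}(s)}.
\end{align*}
The crucial ingredient is the identity $\EE\sbr{d\Nb^{(k-1)}(s)}=\mub^{(k-1)}(s)\,ds$: conditioning on $\Hcal_{s-}$ yields $\EE\sbr{d\Nb^{(k-1)}(s)\mid\Hcal_{s-}}=\lambdab^{(k-1)}(s)\,ds$, because $\lambdab^{(k-1)}(s)$ is precisely the $\Hcal_{s-}$-measurable intensity of the counting process $\Nb^{(k-1)}$, and taking the outer expectation over $\Hcal_{s-}$ replaces $\lambdab^{(k-1)}(s)$ by its mean $\mub^{(k-1)}(s)$. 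Substituting this, pulling the constant vector $\lambdab^{(0)}$ out of the integral, and applying the inductive hypothesis gives
\begin{align*}
\mub^{(k)}(t) = \int_0^t \Gb(t-s)\, \mub^{(k-1)}(s)\, ds = \left(\int_0^t \Gb(t-s)\, \Gb^{(\star k-1)}(s)\, ds\right)\lambdab^{(0)} = \Gb^{(\star k)}(t)\,\lambdab^{(0)},
\end{align*}
where the final equality is exactly the definition of the auto-convolution matrix $\Gb^{(\star k)}(t)$. This closes the induction.

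The main obstacle I anticipate is making the middle step rigorous, namely the interchange of the expectation with the Riemann--Stieltjes integral against $d\Nb^{(k-1)}$ and the accompanying claim $\EE\sbr{d\Nb^{(k-1)}(s)}=\mub^{(k-1)}(s)\,ds$. I would justify the interchange by Fubini's theorem, which is legitimate here because $\Gb$ is built from the nonnegative, integrable kernel $g$ and the generation intensities are integrable on the finite window $[0,t]$; the second claim is the law of iterated expectation applied to the compensator of $\Nb^{(k-1)}$. A subtlety worth checking carefully is that the conditioning in the intensity definition is with respect to the full history, whereas the branching construction treats generation $k$ as a Poisson superposition driven by the realized generation-$(k-1)$ events; one must confirm that the relevant filtration makes $\lambdab^{(k-1)}(s)$ predictable so that the tower step is valid. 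Everything else is routine algebra of the convolution.
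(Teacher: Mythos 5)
Your proof is correct and follows essentially the same route as the paper's: induction on $k$, with the base case given by the constancy of $\lambdab^{(0)}$ and the inductive step combining the recursion \eqref{eq:sum_instant_intensity}, the interchange of expectation and integral, and the tower-property identity $\EE\sbr{d\Nb^{(k-1)}(s)} = \mub^{(k-1)}(s)\,ds$ before recognizing the convolution $\Gb^{(\star k)}(t)$. Your added remarks on Fubini and predictability only make explicit what the paper leaves implicit.
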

Next, by summing together all auto-convolution matrices,
$$
  \Psib(t) := \boldsymbol{I}+  \Gb^{(\star 1)}(t) + \Gb^{(\star 2)}(t) + \ldots
$$
we obtain a linear relation between the expected intensity of the network and the intensity of the exogenous events,~\ie, $\mub(t)  = \Psib(t)  \lambdab^{(0)}$. The entries in the matrix $\Psib(t)$ roughly encode the ``influence'' between pairs of users.
More precisely, the entry $\Psib _{uv}(t)$ is the expected intensity of events at node $u$ due to a unit level of exogenous intensity at node $v$.
We can also derive several other useful quantities from $\Psib(t)$.
For example,  $\Psib_{\bullet v}(t) := \sum_u \Psib_{uv}(t)$ can be thought of as the overall influence user $v$ on has on all users. Surprisingly, for exponential kernel, the infinite sum of matrices results in a closed form using matrix exponentials. First, let~$\widehat{\cdot}$~denote the Laplace transform of a function, and we have the following intermediate results on the Laplace transform of $\Gb^{(\star k)}(t)$.
\begin{lemma}
  \label{lem:laplace}
  $\widehat{\Gb}^{(\star k)}(z) = \int_0^\infty \Gb^{(\star k)}(t) \, dt = \frac{1}{z}
  \cdot \frac{\Ab^k}{(z+\omega)^k}$
\end{lemma}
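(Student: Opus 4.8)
The plan is to take the Laplace transform of both sides of the defining recursion $\Gb^{(\star k)}(t) = \Gb(t) \star \Gb^{(\star k-1)}(t)$ and then induct on $k$, exploiting the fact that the Laplace transform sends convolution to ordinary (matrix) multiplication. Writing $\widehat{f}(z) = \int_0^\infty e^{-zt} f(t)\,dt$ for the one-sided transform, the induction rests on two elementary ingredients: the transform of the single kernel matrix $\Gb$, and the transform of the base case $\Gb^{(\star 0)} = \Ib$.

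For the first ingredient, since we use the exponential kernel we have $\Gb(t) = g(t)\,\Ab = e^{-\omega t}\,\Ab$ for $t>0$, so that $\widehat{\Gb}(z) = \Ab\int_0^\infty e^{-zt}e^{-\omega t}\,dt = \frac{\Ab}{z+\omega}$. For the base case, $\widehat{\Gb}^{(\star 0)}(z) = \int_0^\infty e^{-zt}\,\Ib\,dt = \frac{1}{z}\,\Ib$. Because each $\Gb^{(\star k)}(t)$ is a bounded scalar function times $\Ab^k$, these transforms all converge for $\mathrm{Re}(z) > 0$.

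Next I would establish the matrix convolution theorem: for matrix-valued functions $\mathbf{P},\mathbf{Q}$ supported on $[0,\infty)$ with $(\mathbf{P}\star\mathbf{Q})(t) = \int_0^t \mathbf{P}(t-s)\mathbf{Q}(s)\,ds$, one has $\widehat{\mathbf{P}\star\mathbf{Q}}(z) = \widehat{\mathbf{P}}(z)\,\widehat{\mathbf{Q}}(z)$. This reduces entrywise to the scalar statement, since $[(\mathbf{P}\star\mathbf{Q})(t)]_{ij} = \sum_l (P_{il}\star Q_{lj})(t)$; applying the scalar convolution theorem and linearity gives $[\widehat{\mathbf{P}\star\mathbf{Q}}(z)]_{ij} = \sum_l \widehat{P_{il}}(z)\,\widehat{Q_{lj}}(z) = [\widehat{\mathbf{P}}(z)\widehat{\mathbf{Q}}(z)]_{ij}$. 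Applying this to the recursion yields $\widehat{\Gb}^{(\star k)}(z) = \widehat{\Gb}(z)\,\widehat{\Gb}^{(\star k-1)}(z)$, and unrolling down to the base case gives $\widehat{\Gb}^{(\star k)}(z) = [\widehat{\Gb}(z)]^k\,\widehat{\Gb}^{(\star 0)}(z) = \left(\frac{\Ab}{z+\omega}\right)^{k}\frac{\Ib}{z} = \frac{1}{z}\cdot\frac{\Ab^k}{(z+\omega)^k}$, which is the claim.

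I expect the only genuine subtlety to be the careful statement of the matrix convolution theorem and of the half-plane on which all transforms converge; the algebra itself is painless, because every factor is a scalar function times a power of $\Ab$, so all matrix products commute and no ordering issues arise. The one point worth stating explicitly is that the lone factor of $1/z$ originates from the base case $\widehat{\Gb}^{(\star 0)} = \frac{1}{z}\Ib$ (the transform of the \emph{constant} matrix $\Ib$, not of the kernel), which is exactly what produces the $1/z$ prefactor rather than an extra power of $(z+\omega)$ in the final expression.
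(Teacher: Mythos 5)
Your proof is correct and follows essentially the same route as the paper's: induction (here unrolled) on the recursion $\Gb^{(\star k)} = \Gb \star \Gb^{(\star k-1)}$, using $\widehat{\Gb}(z) = \Ab/(z+\omega)$ for the exponential kernel, the base case $\widehat{\Gb}^{(\star 0)}(z) = \Ib/z$, and the convolution-to-product property of the Laplace transform. Your explicit entrywise justification of the matrix convolution theorem is a small addition the paper omits, but it does not change the argument.
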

With Lemma~\ref{lem:laplace}, we are in a position to prove our main theorem below:
\begin{theorem}
  \label{theo:lin_rel}
  $
    \mub(t) = \Psib(t) \lambdab^{(0)} = \rbr{ e^{(\boldsymbol{A}-\omega \boldsymbol{I})t} +
    \omega (\boldsymbol{A}-\omega \boldsymbol{I})^{-1} ( e^{(\boldsymbol{A}-\omega \boldsymbol{I})t} - \boldsymbol{I} ) } \lambdab^{(0)}.
  $
\end{theorem}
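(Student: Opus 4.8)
The plan is to move to the Laplace domain, where the infinite series defining $\Psib(t)$ becomes a matrix geometric series that sums in closed form, and then to invert the transform term by term; the algebra is short, and the care goes into the analytic justification. The first equality $\mub(t) = \Psib(t)\lambdab^{(0)}$ comes essentially for free: substituting Lemma~\ref{lem:br_intensity} into the generation decomposition~\eqref{eq:mu_dec} and factoring out the time- and history-independent vector $\lambdab^{(0)}$ gives
\begin{align*}
\mub(t) = \sum_{k=0}^{\infty}\mub^{(k)}(t) = \sum_{k=0}^{\infty}\Gb^{(\star k)}(t)\,\lambdab^{(0)} = \Big(\sum_{k=0}^{\infty}\Gb^{(\star k)}(t)\Big)\lambdab^{(0)} = \Psib(t)\,\lambdab^{(0)},
\end{align*}
so the task reduces to identifying $\Psib(t)$ in closed form.

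Next I would take the Laplace transform of $\Psib(t)=\sum_{k\geq 0}\Gb^{(\star k)}(t)$ and apply Lemma~\ref{lem:laplace} to each term, obtaining $\widehat{\Psib}(z) = \tfrac{1}{z}\sum_{k=0}^{\infty}\big(\Ab/(z+\omega)\big)^{k}$. On the region where the spectral radius obeys $\rho(\Ab) < |z+\omega|$, this matrix geometric series converges to $(\Ib - \Ab/(z+\omega))^{-1}$, whence
\begin{align*}
\widehat{\Psib}(z) = \frac{z+\omega}{z}\,\big((z+\omega)\Ib - \Ab\big)^{-1} = \frac{z+\omega}{z}\,\big(z\Ib - (\Ab-\omega\Ib)\big)^{-1}.
\end{align*}

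Finally I would invert $\widehat{\Psib}(z)$. Writing $\tfrac{z+\omega}{z} = 1 + \tfrac{\omega}{z}$ splits it into $(z\Ib-\boldsymbol{B})^{-1}$ and $\tfrac{\omega}{z}(z\Ib-\boldsymbol{B})^{-1}$, where $\boldsymbol{B}:=\Ab-\omega\Ib$. The resolvent $(z\Ib-\boldsymbol{B})^{-1}$ is the Laplace transform of $e^{\boldsymbol{B}t}$, which disposes of the first term. For the second I would use the matrix partial-fraction identity $\tfrac{1}{z}(z\Ib-\boldsymbol{B})^{-1} = \boldsymbol{B}^{-1}\big[(z\Ib-\boldsymbol{B})^{-1} - \tfrac{1}{z}\Ib\big]$, which is legitimate because $\boldsymbol{B}$ commutes with $z\Ib-\boldsymbol{B}$, and whose inverse transform is $\boldsymbol{B}^{-1}(e^{\boldsymbol{B}t}-\Ib)$. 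Adding the two contributions yields $e^{\boldsymbol{B}t} + \omega\boldsymbol{B}^{-1}(e^{\boldsymbol{B}t}-\Ib)$, which is exactly the claimed expression with $\boldsymbol{B}=\Ab-\omega\Ib$.

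The main obstacle is the analytic bookkeeping rather than the computation. I must justify exchanging the infinite sum with the Laplace transform and secure convergence of the matrix geometric series, which holds only for $|z+\omega|>\rho(\Ab)$; the resulting closed form then extends to the region needed for inversion by analytic continuation. I must also guarantee that $\Ab-\omega\Ib$ is invertible so the partial-fraction step is valid. Should it be singular, the apparent singularity is in fact removable: $\omega\boldsymbol{B}^{-1}(e^{\boldsymbol{B}t}-\Ib)$ is to be read as the everywhere-convergent series $\omega\sum_{j\geq 1}(\Ab-\omega\Ib)^{j-1}t^{j}/j!$, so the formula remains meaningful even in that degenerate case.
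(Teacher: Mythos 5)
Your proof is correct and follows essentially the same route as the paper's: take the Laplace transform of the series defining $\Psib(t)$, sum it in closed form via Lemma~\ref{lem:laplace}, and invert back, with your resolvent/partial-fraction inversion being equivalent to the paper's use of the frequency-shift and ``division by $z$ equals time-integration'' rules applied to $\widehat{\Fb}(z+\omega)$ where $\Fb(t)=e^{\Ab t}$. Your added care about convergence of the geometric series and the removable singularity when $\Ab-\omega\Ib$ fails to be invertible goes slightly beyond what the paper records, but the underlying argument is the same.
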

%
Theorem~\ref{theo:lin_rel} provides us a linear relation between exogenous event intensity and the expected overall intensity at any point in time but not just stationary intensity. The significance of this result is that it allows us later to design a diverse range of convex programs to determine the intensity of the exogenous event in order to achieve a target intensity.

In fact, we can recover the previous results in the stationary case as a special case of our general result. More specifically, a multivariate Hawkes process is stationary if the spectral radius
\begin{equation}
  \boldsymbol{\Gamma} := \int_{0}^{\infty} \Gb(t)\,dt = \rbr{\int_{0}^{\infty}g(t)\,dt} \Bigl( a_{uu'} \Bigr)_{u,u'\in [m]} = \frac{\Ab}{\omega}
\end{equation}
is strictly smaller than 1~\cite{Liniger2009}. In this case, the expected intensity is $\mub = (\Ib - \boldsymbol{\Gamma})^{-1} \lambdab^{(0)}$ independent of the time.
We can obtain this relation from theorem~\ref{theo:lin_rel} if we let $t \rightarrow \infty$.
\begin{corollary}
  $\mub =  \left(\mathbf{I} - \boldsymbol{\Gamma}\right)^{-1} \lambdab^{(0)} = \lim_{t \rightarrow \infty} \Psib(t) \, \lambdab^{(0)}$.
\end{corollary}
Refer to Appendix \ref{append:proofs} for all the proofs.

\section{Convex Activity Shaping Framework}
\label{sec:proposed}
Given the linear relation between exogenous event intensity and expected overall event intensity, we now propose a convex optimization framework for a variety of activity shaping tasks. In all tasks discussed below, we will optimize the exogenous event intensity $\lambdab^{(0)}$ such that the expected overall event intensity $\mub(t)$ is maximized with respect to some concave utility $U(\cdot)$ in $\mub(t)$,~\ie,
\begin{equation}
	\label{eq:generalized-activity-maximization}
	\begin{array}{ll}
		\mbox{maximize}_{\mub(t),\lambdab^{(0)}} & U( \mub(t)) \\
		\mbox{subject to} & \mub(t)=\Psib(t) \lambdab^{(0)},\quad \cbb^\top \lambdab^{(0)} \leqslant C,\quad \lambdab^{(0)} \geqslant 0
	\end{array}
\end{equation}
where $\cbb = (c_1,\ldots,c_m)^\top \geqslant 0$ is the cost per unit event for each user and $C$ is the total budget. Additional regularization can also be added to $\lambdab^{(0)}$ either to restrict the number of incentivized users (with $\ell_0$ norm $\|\lambdab^{(0)}\|_0$), or to promote a sparse solution (with $\ell_1$ norm $\|\lambdab^{(0)}\|_1$, or to obtain a smooth solution (with $\ell_2$ regularization $\|\lambdab^{(0)}\|_2$).
We next discuss several instances of the general framework which achieve different goals (their constraints remain the same and hence omitted).

{\bf Capped Activity Maximization.}
In real networks, there is an upper bound (or a cap) on the activity each user can generate due to limited attention of a user. For example,
a Twitter user typically posts a limited number of shortened urls or retweets a limited number of tweets~\cite{GomGumSch14}. Suppose we know the upper bound, $\alpha_u$, on a user'{}s activity,~\ie, how much activity each user is willing to generate.
Then we can perform the following \emph{capped activity maximization} task
\begin{equation}
	\label{eq:const-average-activity-maximization}
	\begin{array}{ll}
		\mbox{maximize}_{\mub(t),\lambdab^{(0)}} & \sum_{u\in [m]} \min\cbr{\mu_u(t), \alpha_u} \\
	\end{array}
\end{equation}

{\bf Minimax Activity Shaping.}
Suppose our goal is instead maintaining the activity of each user in the network above a certain minimum level, or, alternatively make the user with the minimum activity as active as possible.
Then, we can perform the following \emph{minimax activity shaping} task
\begin{equation}
	\label{eq:minimax-activity-shaping}
	\begin{array}{ll}
		\mbox{maximize}_{\mub(t),\lambdab^{(0)}} & \min_u ~\mu_u(t)\\
 	\end{array}
\end{equation}

{\bf Least-Squares Activity Shaping.}
Sometimes we want to achieve a pre-specified target activity levels, $\vb$, for users.
For example, we may like to divide users into groups and desire a different level of activity in each group. Inspired by these examples, we can perform the following \emph{least-squares activity shaping} task
\begin{equation}
	\label{eq:least-squares-activity-shaping}
	\begin{array}{ll}
		\mbox{maximize}_{\mub(t),\lambdab^{(0)}} & -\|\Bb \mub(t)-\vb\|_2^2 \\
	\end{array}
\end{equation}
where $\Bb$ encodes potentially additional constraints (\eg, group partitions). Besides Euclidean distance, the family of Bregman divergences can be used to measure the difference between $\Bb \mub(t)$ and $\vb$ here. That is, given a function $f(\cdot):\RR^m \mapsto \RR$ convex in its argument, we can use $D(\Bb \mub(t) \| \vb):=f(\Bb \mub(t)) - f(\vb) - \inner{\nabla f(\vb)}{\Bb \mub(t) - \vb}$ as our objective function.

{\bf Activity Homogenization.} Many other concave utility functions can be used. For example, we may want to steer users activities to a more homogeneous profile. If we measure homogeneity of activity with Shannon entropy, then we can perform the following {activity homogenization} task
\begin{equation}
  \label{eq:activity-shaping}
  \begin{array}{ll}
    \mbox{maximize}_{\mub(t),\lambdab^{(0)}} & -\sum_{u \in [m]} \mu_u(t) \ln \mu_u(t)  \\
  \end{array}
\end{equation}

\section{Scalable Algorithm}
%
All the activity shaping problems defined above require an efficient evaluation of the instantaneous average intensity $\mub(t)$ at time $t$, which entails computing matrix 
exponentials to obtain $\boldsymbol{\Psi}(t)$.
In small or medium networks, we can rely on well-known numerical methods to compute matrix exponentials~\cite{GolVan12}.
However, in large networks with sparse graph structure $\Ab$, the explicit computation of $\boldsymbol{\Psi}(t)$ quickly becomes intractable.

Fortunately, we can exploit the following key property of our convex activity shaping framework: the instantaneous average intensity only depends on $\boldsymbol{\Psi}(t)$ through 
matrix-vector product operations. In particular, we start by using Theorem \ref{theo:lin_rel} to rewrite the multiplication of $\boldsymbol{\Psi}(t)$ and a vector $\vb$ as
$
\boldsymbol{\Psi}(t) \vb = e^{(\boldsymbol{A}-\omega \boldsymbol{I})t} \vb+
  \omega (\boldsymbol{A}-\omega \boldsymbol{I})^{-1} \left(  e^{(\boldsymbol{A}-\omega \boldsymbol{I})t}\vb - \vb \right)
$.
We then get a tractable solution by first computing $e^{(\boldsymbol{A}-\omega \boldsymbol{I})t}\vb$ efficiently, subtracting $\vb$ from it, and solving a sparse linear system of 
equations, $ (\boldsymbol{A}-\omega \boldsymbol{I}) x =  \left(  e^{(\boldsymbol{A}-\omega \boldsymbol{I})t}\vb - \vb \right)$, efficiently. 
The steps are illustrated in Algorithm~\ref{exp-algorithm}.
Next, we elaborate on two very efficient algorithms for computing the product of matrix exponential with a vector and for solving a sparse linear system of equations.

\begin{algorithm}[t]
\caption{Average Instantaneous Intensity}
\label{exp-algorithm}
\SetKwInOut{Input}{input}
\SetKwInOut{Output}{output}
\SetKw{KwRet}{return}
\SetKwComment{Comment}{}{}
\Input{$\Ab$, $\omega$, $t$, $\vb$}
\Output{$\Psib(t)\vb$}
$\vb_1 = e^{(\boldsymbol{A}-\omega \boldsymbol{I})t} \vb$  \Comment*[l]{\hspace{1cm}//Matrix exponential times a vector}
$\vb_2 = \vb_2 - \vb$; \\
$\vb_3 = (\boldsymbol{A}-\omega \boldsymbol{I})^{-1} \vb_2$ \Comment*[l]{\hspace{1cm}//Sparse linear systems of equation}
\KwRet{$\vb_1+ \omega \vb_3$};
\end{algorithm}

For the computation of the product of matrix exponential with a vector, we rely on the iterative algorithm by Al-Mohy et al.~\cite{AlmHig11}, which combines a scaling and squaring method 
with a truncated Taylor series approximation to the matrix exponential.
%

For solving the sparse linear system of equation, we use the well-known GMRES method~\cite{SaaSch86}, which is an Arnoldi process for constructing an $l_2 $-orthogonal basis of Krylov subspaces. 
The method solves the linear system by iteratively minimizing the norm of the residual vector over a Krylov subspace. In detail, consider the $n^{th}$ Krylov subspace for the problem $\Cb \xb = \bb$ 
as
$
\Kb_n = \operatorname{span}  \{ \bb, \Cb \bb, \Ab^2\bb, \ldots, \Cb^{n-1}\bb \}.
$
GMRES approximates the exact solution of  $\Cb \xb = \bb$ by the vector $\xb_n \in \Kb_n$ that minimizes the Euclidean norm of the residual $\rb_n = \Cb \xb_n - \bb$. Because the span consists of 
orthogonal vectors, the Arnoldi iteration is used to find an alternative basis composing rows of $\Qb_n$. Hence, the vector $\xb_n \in \Kb_n$ can be written as
$\xb_n = \Qb_n \yb_n$ with $\yb_n \in \mathbb{R}^n$. 
Then, $\yb_n$ can be found by minimizing the Euclidean norm of the residual $\rb_n = \tilde{\Hb}_n \yb_n - \beta \eb_1$, where $\tilde{\Hb}_n$ is the Hessenberg matrix produced in the Arnoldi 
process, $\eb_1 = (1,0,0,\ldots,0)^T$ is the first vector in the standard basis of $\mathbb{R}^{n+1}$, and $\beta = \|\bb-\Cb \xb_0\| $. 
Finally, $\xb_n$ is computed as $\xb_n = \Qb_n \yb_n$. 
The whole procedure is repeated until reaching a small enough residual.

Perhaps surprisingly, we will now show that it is possible to compute the gradient of the objective functions of all our activity shaping problems using the algorithm developed above for 
computing the average instantaneous intensity. We only need to define the vector $\vb$ appropriately for each problem, as follows:
\emph{(i)} Activity maximization: $\gb(\lambdab^{(0)})= \boldsymbol{\Psi}(t)^\top \vb,$ where $\vb$ is defined such that $v_j=1$ if $\alpha_j > \mu_j$, and  $v_j=0$, otherwise.
\emph{(ii)} Minimax activity shaping:
$
\gb(\lambdab^{(0)})  = \boldsymbol{\Psi}(t)^\top \eb,
$
where $\eb$ is defined such that $e_j=1$ if $\mu_j = \mu_{min}$, and $e_j=0$, otherwise.
\emph{(iii)} Least-squares activity shaping:
$
\gb(\lambdab^{(0)}) = 2\boldsymbol{\Psi}(t)^\top\boldsymbol{B}^\top\rbr{\boldsymbol{B} \boldsymbol{\Psi}(t) \boldsymbol{\lambda}^{(0)} -\vb}.
$
\emph{(iv)} Activity homogenization:
$
\gb(\lambdab^{(0)}) =  \boldsymbol{\Psi}(t)^{\top} \ln{(\boldsymbol{\Psi}(t) \boldsymbol{\lambda}^{(0)})} + \boldsymbol{\Psi}(t)^{\top} \boldsymbol{1},
$
where $\ln(\cdot)$ on a vector is the element-wise natural logarithm.
Since the activity maximization and the minimax activity shaping tasks require only one evaluation of $\Psib{(t)}$ times a vector, Algorithm~\ref{exp-algorithm} can be used directly. 
However, computing the gradient for least-squares activity shaping and activity homogenization is slightly more involved and it requires to be careful with the
order in which we perform the operations. 
Algorithm~\ref{lsq-algorithm} includes the efficient procedure to compute the gradient in the least-squares activity shaping task. Since $\Bb$ is usually sparse, it includes two multiplications 
of a sparse matrix and a vector, two matrix exponentials multiplied by a vector, and two sparse linear systems of equations.
Algorithm~\ref{hom-algorithm} summarizes the steps for efficient computation of the gradient in the activity homogenization task. Assuming again a sparse $\Bb$, it consists of two multiplication 
of a matrix exponential and a vector and two sparse linear systems of equations.

\begin{algorithm}[t]
\caption{Gradient For Least-squares Activity Shaping}
\label{lsq-algorithm}
\SetKwInOut{Input}{input}
\SetKwInOut{Output}{output}
\SetKw{KwRet}{return}
\SetKwComment{Comment}{}{}
\Input{$\Ab$, $\omega$, $t$, $\vb$, $\lambdab^{(0)}$}
\Output{$\gb(\lambdab^{(0)})$}
$\vb_1 = \Psib(t) \lambdab^{(0)}$	\Comment*[l]{\hspace{1cm}//Application of algorithm \ref{exp-algorithm}}
$\vb_2 = \Bb \vb_1$	\Comment*[l]{\hspace{1cm}//Sparse matrix vector product}
$\vb_3 = \Bb^{\top} (\vb_2-\vb)$ \Comment*[l]{\hspace{1cm}//Sparse matrix vector product}
$\vb_4 =  \Psib(t) \vb_3$ \Comment*[l]{\hspace{1cm}//Application of algorithm \ref{exp-algorithm}}
\KwRet{$2\vb_4$}
\end{algorithm}

\begin{algorithm}[t]
\caption{Gradient For Activity Homogenization}
\label{hom-algorithm}
\SetKwInOut{Input}{input}
\SetKwInOut{Output}{output}
\SetKw{KwRet}{return}
\SetKwComment{Comment}{}{}
\Input{$\Ab$, $\omega$, $t$, $\vb$, $\lambdab^{(0)}$}
\Output{$\gb(\lambdab^{(0)})$}
$\vb_1 = \Psib(t) \lambdab^{(0)}$	\Comment*[l]{\hspace{1cm}//Application of algorithm \ref{exp-algorithm}}
$\vb_2 = \ln(\vb_1)$; \\
$\vb_3 =  \Psib(t)^{\top} \vb_2$ \Comment*[l]{\hspace{1cm}//Application of algorithm \ref{exp-algorithm}}
$\vb_4 = \Psib(t)^{\top} \mathbf{1}$	\Comment*[l]{\hspace{1cm}//Application of algorithm \ref{exp-algorithm}}
\KwRet{$\vb_3 + \vb_4$}
\end{algorithm}

Equipped with an efficient way to compute of gradients, we solve the corresponding convex optimization problem for each activity shaping problem by applying the projected gradient 
descent~\cite{BoyVan04} optimization framework with the appropriate gradient\footnote{\label{note1}For nondifferential objectives, subgradient algorithms can be used instead.}.
Algorithm \ref{algorithm1} summarizes the key steps of the algorithm.

\begin{algorithm}[t]
\caption{Projected Gradient Descent for Activity Shaping}
\label{algorithm1}
Initialize $\lambdab^{(0)}$\;
\Repeat{convergence}
{
   1- Project $\lambdab^{(0)}$ into the linear space $\lambdab^{(0)}\geqslant 0$, $\boldsymbol{c^\top\lambda^{(0)}}\leqslant\boldsymbol{C}$\;
   2- Evaluate the gradient $\gb(\lambdab^{(0)})$ at $\boldsymbol{\lambda}^{(0)}$\;
   3- Update $\lambdab^{(0)}$ using the gradient $\gb(\lambdab^{(0)})$\;
}
\end{algorithm}


\section{Experimental Evaluation} 
\label{sec:evaluation}

We evaluate our activity shaping framework using both simulated and real world held-out data, and show that our approach significantly outperforms several baselines. 

\subsection{Experimental Setup}
Here, we briefly present our data, evaluation schemas, and settings.

{\bf Dataset description and network inference.}
We use data gathered from Twitter as reported in~\cite{MeeHadBenGum10}, which comprises of all public tweets posted by 60,000 users during a 8-month period, from January 2009
to September 2009.
For every user, we record the times she uses any of the following six url shortening services: Bitly , TinyURL, Isgd, TwURL, SnURL, Doiop (refer to Appendix~\ref{append:evaluation} for details).
We evaluate the performance of our framework on a subset of 2,241 active users, linked by 4,901 edges, which we call 2K dataset, and we evaluate
its scalability on the overall 60,000 users, linked by $\sim$ 200,000 edges, which we call 60K dataset.
The 2K dataset accounts for 691,020 url shortened service uses while the 60K dataset accounts for $\sim$7.5 million uses.
Finally, we treat each service as independent cascades of events.

In the experiments, we estimated the nonnegative influence matrix $\Ab$ and the exogenous intensity $\boldsymbol{\lambda}^{(0)}$ using maximum log-likelihood, as in previous
work~\cite{ZhoZhaSon13, ZhoZhaSon13b, ValGomGum14}.
We used a temporal resolution of one minute and selected the bandwidth $\omega=0.1$ by cross validation. 
Loosely speaking, $\omega=0.1$ corresponds to loosing 70\% of the initial influence after 10 minutes,
which may be explained by the rapid rate at which each user'{} news feed gets updated. 

{\bf Evaluation schemes.} We focus on three tasks: capped activity maximization, minimax activity shaping, and least square activity shaping. We set the total budget to $C=0.5$,
which corresponds to supporting a total extra activity equal to $0.5$ actions per unit time, and assume all users entail the same cost.
In the capped activity maximization, we set the upper limit of each user'{}s intensity, $\boldsymbol{\alpha}$, by adding a nonnegative random vector to their inferred initial intensity.
In the least-squares activity shaping, we set $\boldsymbol{B} = \bold{I}$ and aim to create three groups of users, namely less-active, moderate, and super-active users.
%
%
We use three different evaluation schemes, with an increasing resemblance to a real world scenario:

\begin{figure*}[!t]
  \centering
  \setlength{\tabcolsep}{6pt}
  \begin{tabular}{ccc}
          \includegraphics[width=0.30\textwidth]{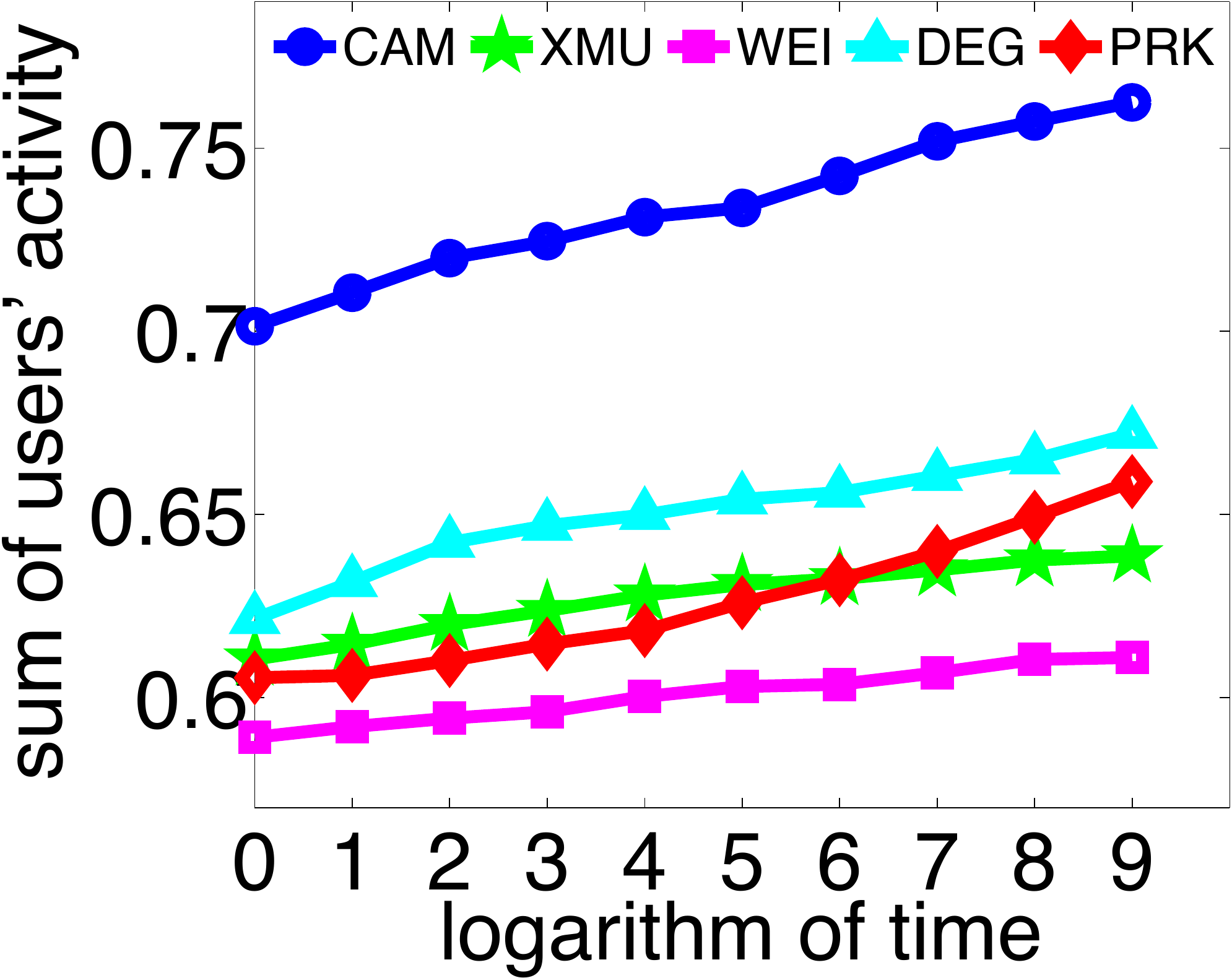} &
          \includegraphics[width=0.30\textwidth]{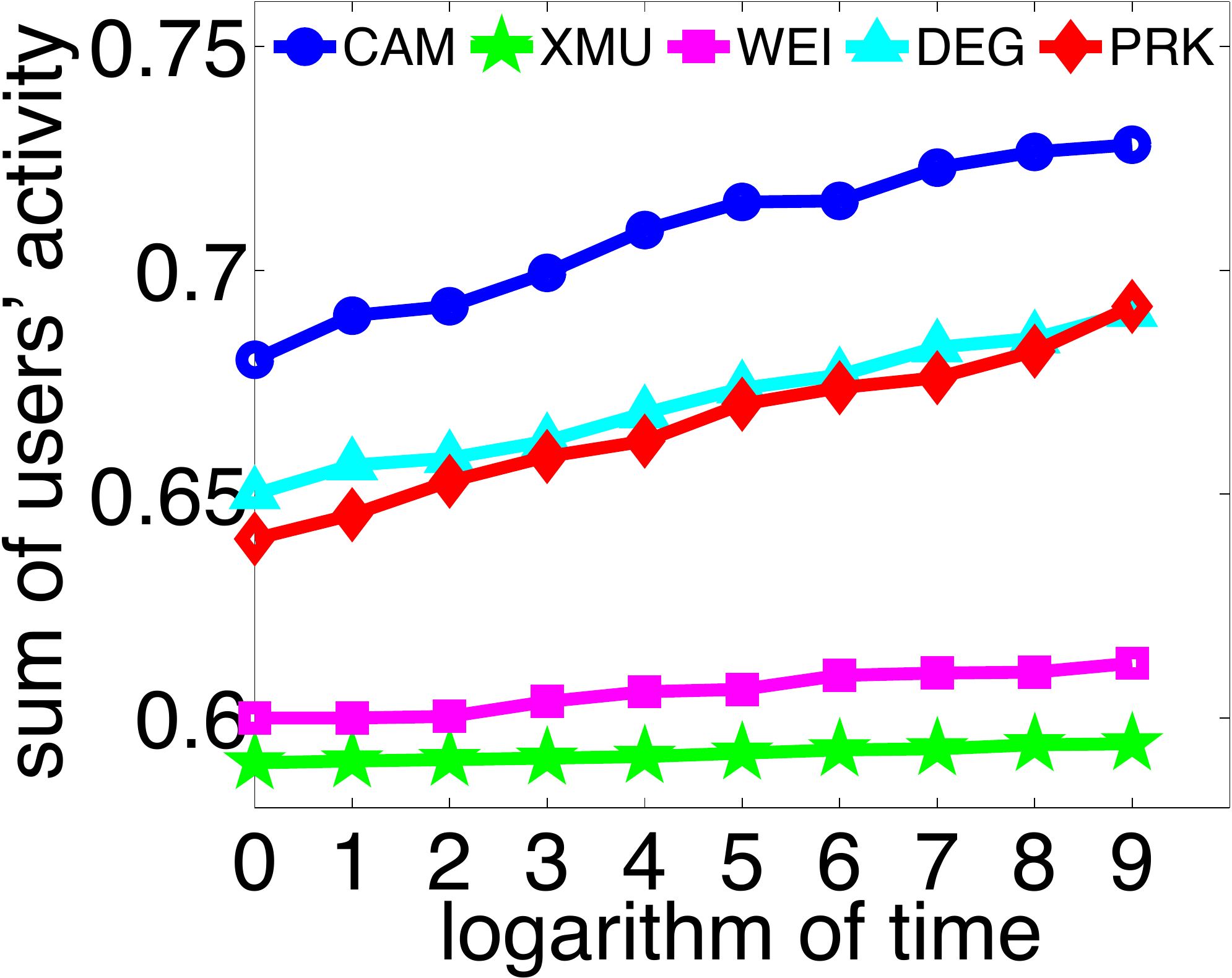} &
          \includegraphics[width=0.33\textwidth]{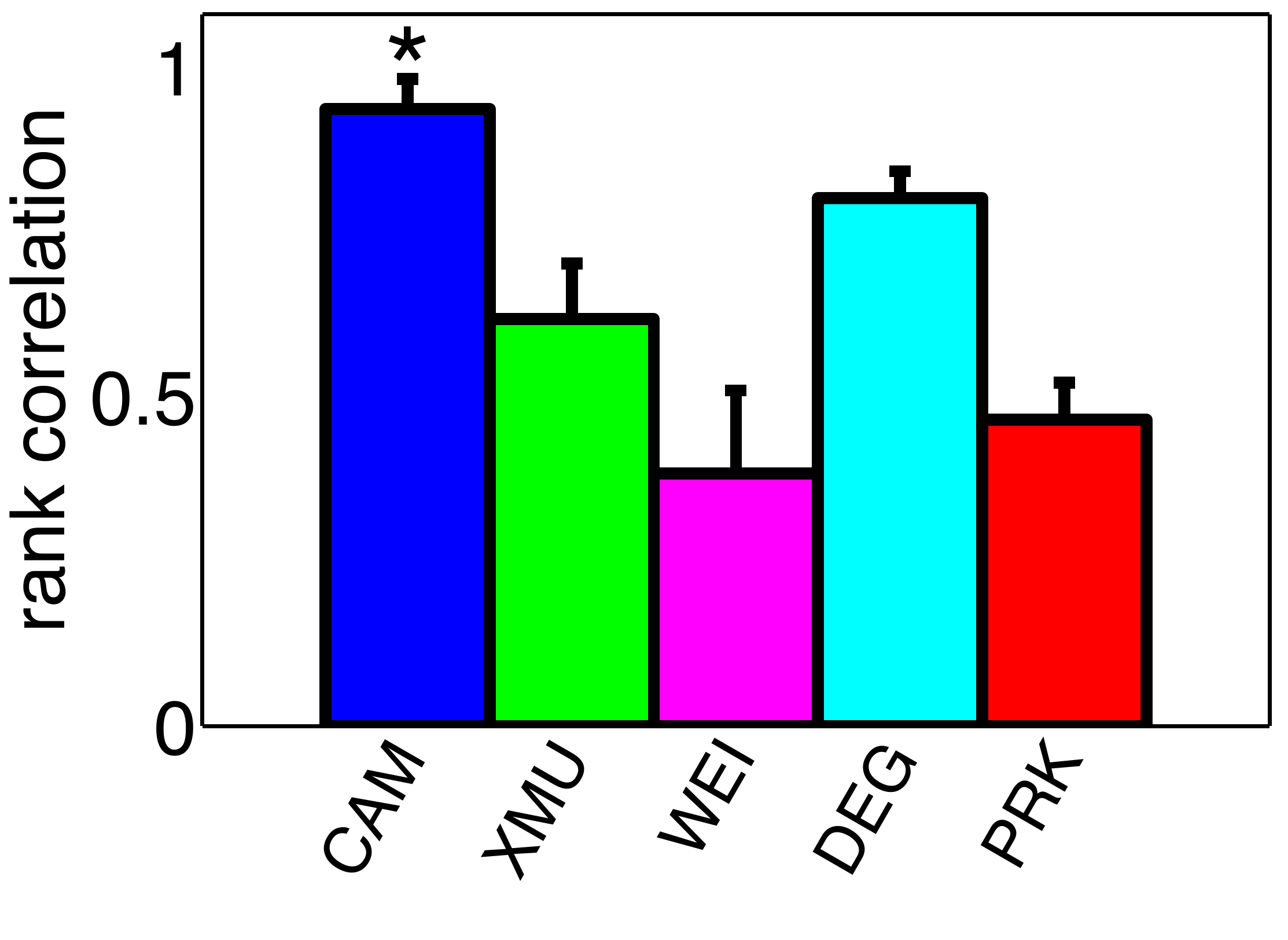} \\
          \includegraphics[width=0.30\textwidth]{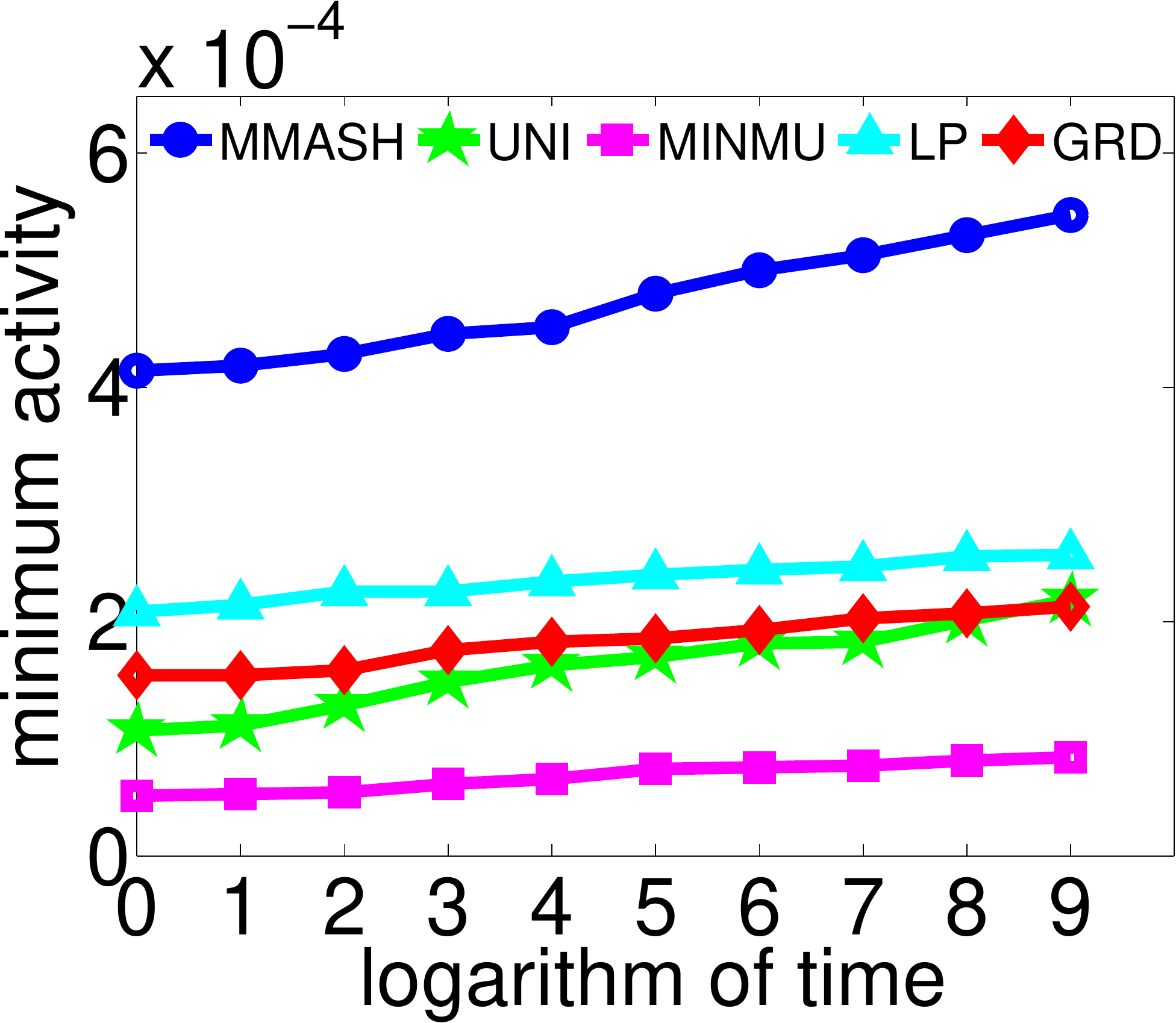} &
          \includegraphics[width=0.30\textwidth]{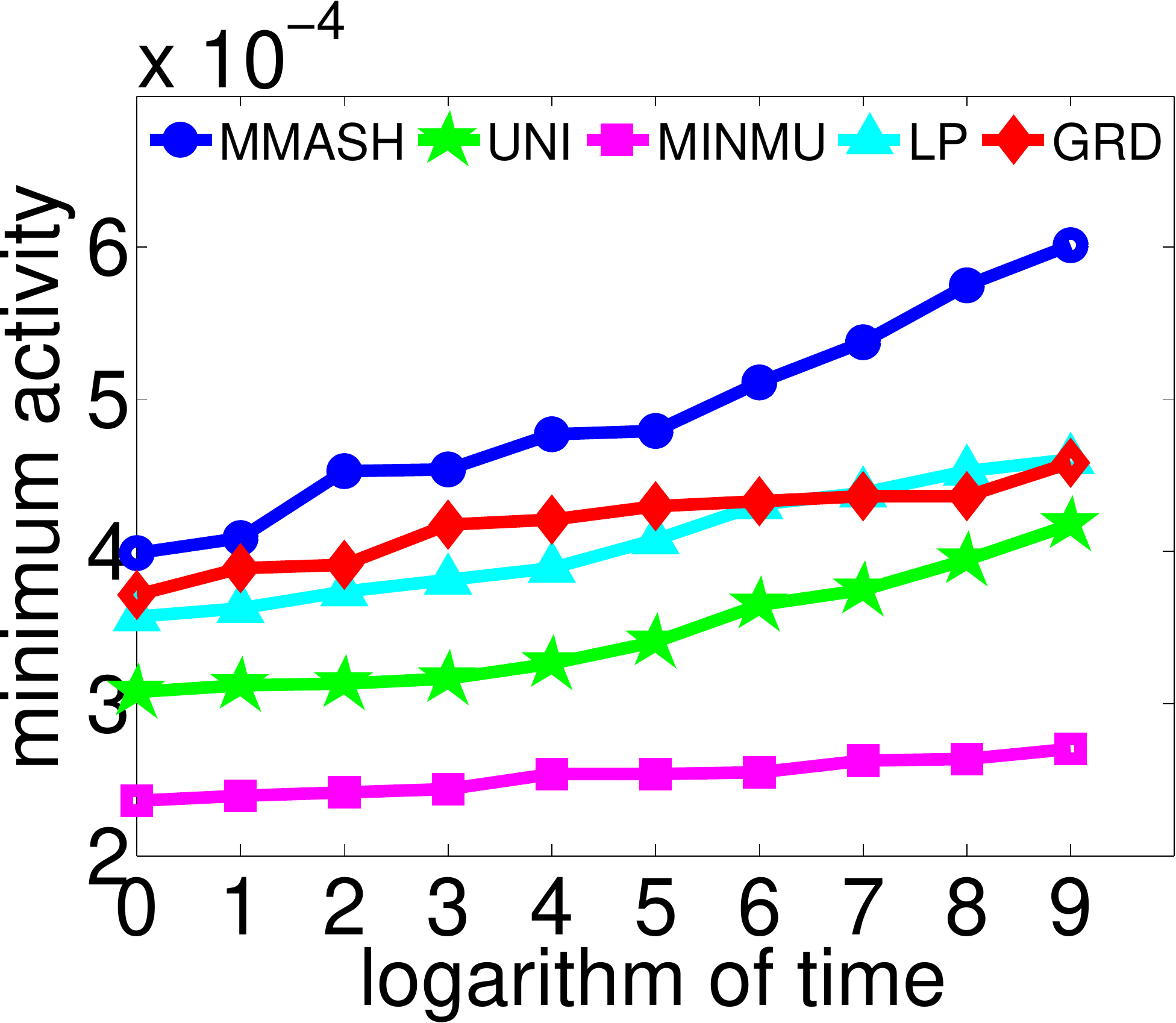} &
          \includegraphics[width=0.33\textwidth]{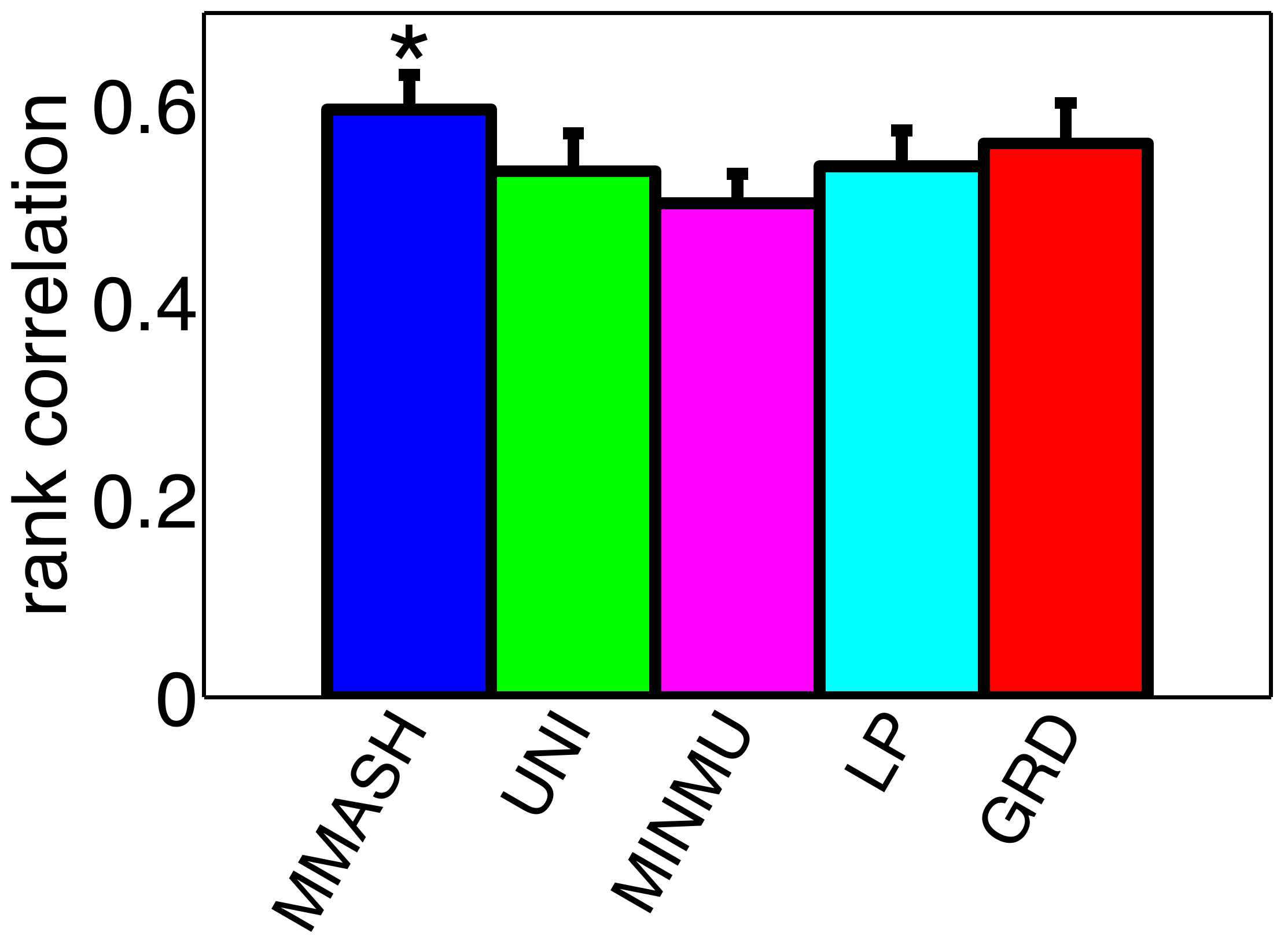} \\
          \includegraphics[width=0.30\textwidth]{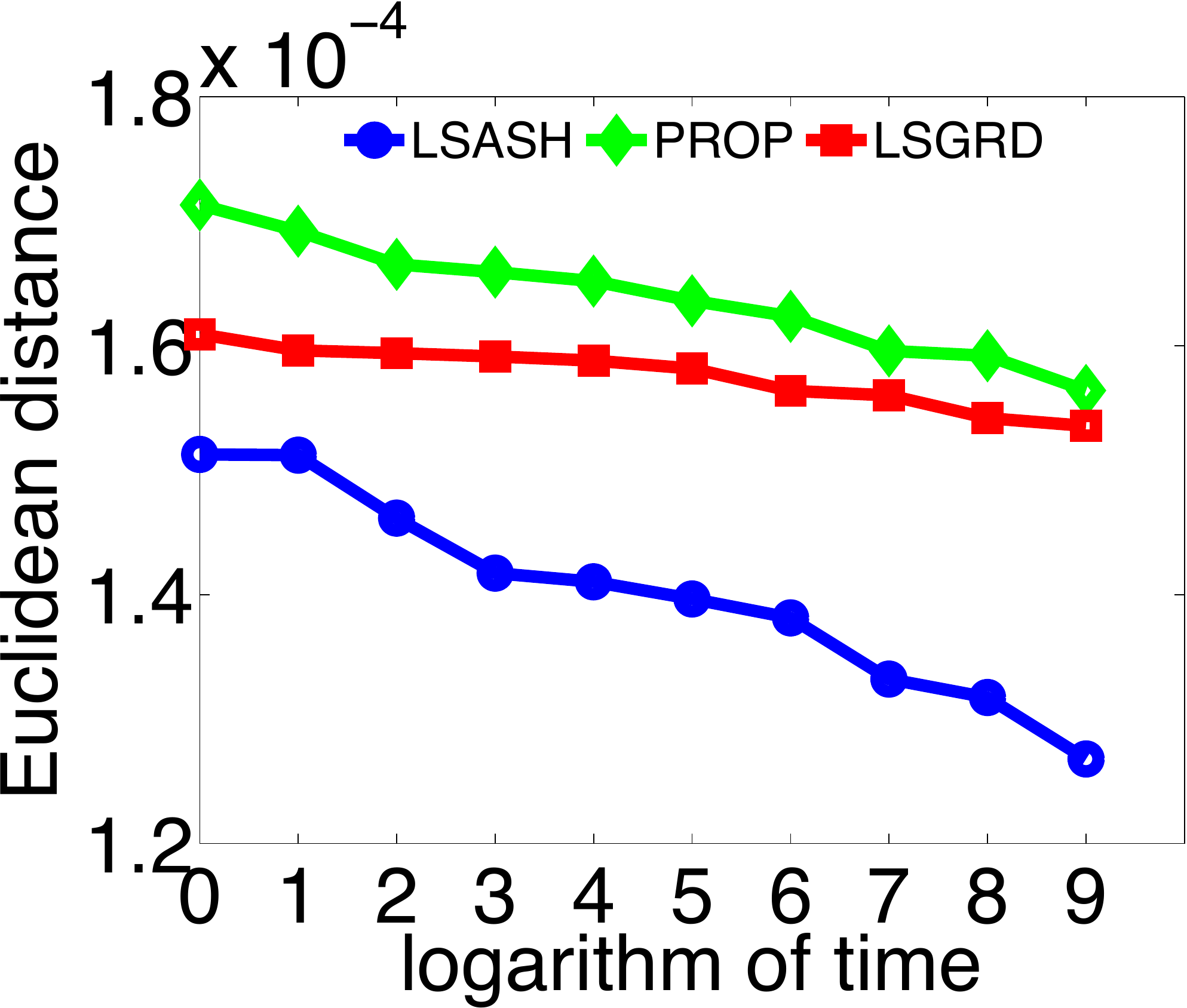} &
          \includegraphics[width=0.30\textwidth]{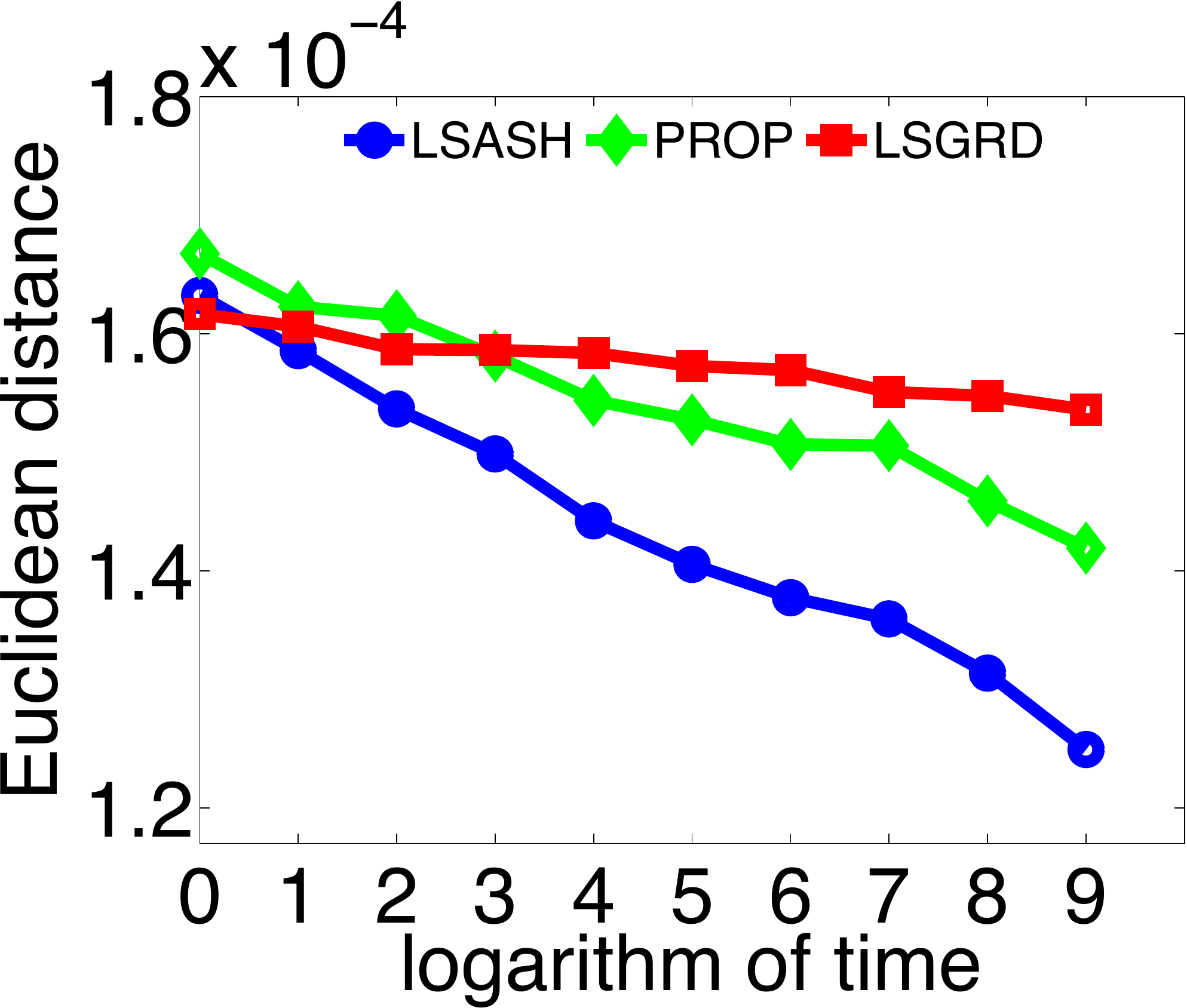} &
          \includegraphics[width=0.33\textwidth]{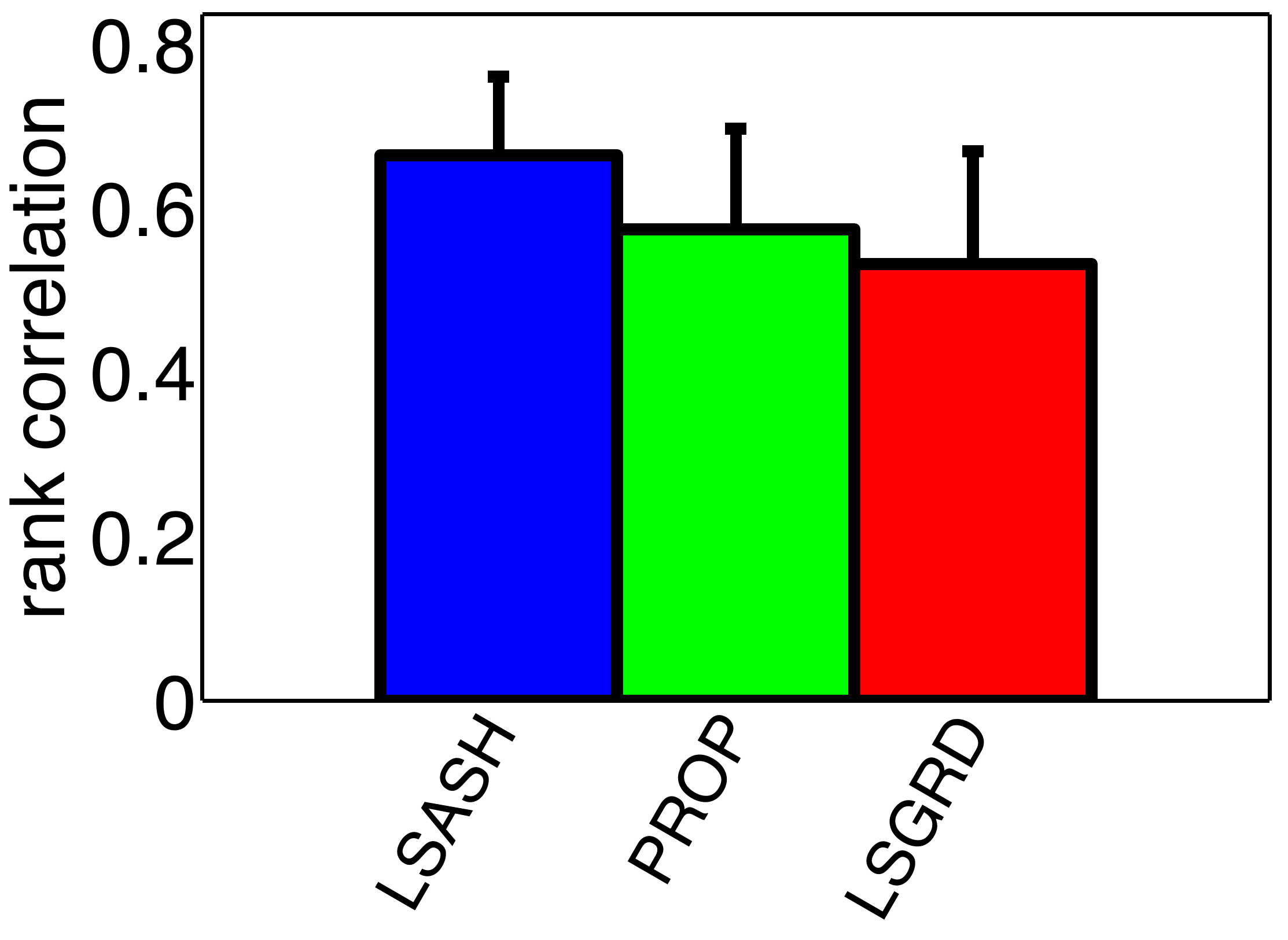} \\
          (a) Theoretical objective & (b) Simulated objective & (c) Held-out data
  \end{tabular}
  \caption{Row 1: Capped activity maximization. Row 2: Minimax activity shaping. Row 3: Least-squares activity shaping. * means statistical significant at level of 0.01 with paired t-test between our method and the second best}
  \label{fig:results}
\end{figure*}

\emph{Theoretical objective}: We compute the expected overall (theoretical) intensity by applying Theo\-rem~\ref{theo:lin_rel} on the optimal exogenous event intensities, $\lambdab_{opt}^{(0)}$,
to each of the three activity shaping tasks, as well as the learned $\Ab$ and $\omega$. We then compute and report the value of the objective functions. 

\emph{Simulated objective}: We simulate $50$ cascades with Ogata'{}s thinning algorithm~\cite{Ogata1981}, using the optimal exogenous event intensities, $\lambdab_{opt}^{(0)}$, to each
of the three activity shaping tasks, and the learned $\Ab$ and $\omega$.
We then estimate empirically the overall event intensity based on the simulated cascades, by computing a running average over non-overlapping time windows, 
and report the value of the objective functions based on this estimated overall intensity.
%
%
Appendix~\ref{append:temp} provides a comparison between the simulated and the theoretical objective.

\emph{Held-out data}: The most interesting evaluation scheme would entail carrying out real interventions in a social platform. However, since this is very challenging to do,
instead, in this evaluation scheme, we use held-out data to simulate such process, proceeding as follows.
We first partition the $8$-month data into $50$ five-day long contiguous intervals. Then, we use one interval for training and the remaining $49$ intervals for testing.
Suppose interval $1$ is used for training, the procedure is as follows:\\[-4mm]
\begin{enumerate}[noitemsep, nolistsep]
  \item We estimate $\Ab_1$, $\omega_1$ and $\lambdab_1^{(0)}$ using the events from interval $1$. Then, we fix $\Ab_1$ and $\omega_1$,  and estimate $\lambdab_i^{(0)}$ for all other intervals,
  $i=2,\ldots,49$.
  \item Given $\Ab_1$ and $\omega_1$, we find the optimal exogenous event intensities, $\lambdab_{opt}^{(0)}$, for each of the three activity shaping task, by solving the associated convex program.
  We then sort the estimated $\lambdab_i^{(0)}$ ($i=2,\ldots,49$) according to their similarity to $\lambdab_{opt}^{(0)}$, using the Euclidean distance $\|\lambdab_{opt}^{(0)} - \lambdab_i^{(0)}\|_2$.
  \item We estimate the overall event intensity for each of the $49$ intervals ($i=2,\ldots,49$), as in the ``simulated objective'' evaluation scheme, and sort these intervals according to the value of their
  corresponding objective function.
  \item Last, we compute and report the rank correlation score between the two orderings obtained in step 2 and 3.\footnote{rank correlation = number of pairs with consistent ordering / total number of
  pairs.} The larger the rank correlation, the better the method.
\end{enumerate}
We repeat this procedure 50 times, choosing each different interval for training once, and compute and report the average rank correlations. 

It is beneficial to emphasize that the held-out experiments are essentially evaluating prediction performance on test sets. For instance, suppose we are given a diffusion network and two different configuration of incentives. We will shortly show our method can predict more accurately which one will reach the activity shaping goal better. This means, in turn, that if we incentivize the users according to our method's suggestion, we will achieve the target activity better than other heuristics.

Alternatively, one can understand our evaluation scheme like this: if one applies the incentive (or intervention) levels prescribed by a method, how well the predicted outcome coincides with the reality in the test set? A good method should behavior like this: the closer the prescribed incentive (or intervention) levels to the estimated base intensities in test data, the closer the prediction based on training data to the activity level in the test data. In our experiment, the closeness in incentive level is measured by the Euclidean distance, the closeness between prediction and reality is measured by rank correlation.


\subsection{Activity Shaping Results}
In this section, the results for three activity shaping tasks evaluated on the three schemas are presented.

{\bf Capped activity maximization (CAM).} We compare to a number of alternatives. XMU: heuristic based on $\mub(t)$ without optimization; DEG and WEI: heuristics based on the degree of the user;
PRANK: heuristic based on page rank (refer to Appendix~\ref{append:evaluation} for further details).
The first row of Figure~\ref{fig:results} summarizes the results for the three different evaluation schemes. We find that our method (CAM) consistently outperforms the alternatives.
%
For the theoretical objective, CAM is 11 \% better than the second best, DEG. The difference in overall users'{} intensity from DEG is about $0.8$ which, roughly speaking, leads to at least an increase of
about $0.8 \times 60 \times 24 \times 30 = 34,560$ in the overall number of events in a month.
In terms of simulated objective and held-out data, the results are similar and provide empirical evidence that, compared to other heuristics, degree is an appropriate surrogate for influence, while,
based on the poor performance of XMU, it seems that high activity does not necessarily entail being influential.
To elaborate on the interpretability of the real-world experiment on held-out data, consider for example the difference in rank correlation between CAM and DEG, which is almost $0.1$.
Then, roughly speaking, this means that incentivizing users based on our approach accommodates with the ordering of real activity patterns in $0.1\times \frac{ 50 \times 49 }{2} = 122.5$ more pairs of realizations.

{\bf Minimax activity shaping (MMASH).} We compare to a number of alternatives. UNI: heuristic based on equal allocation; MINMU: heuristic based on $\mub(t)$ without optimization; LP: linear programming based heuristic; GRD: a greedy approach to leverage the activity (see Appendix~\ref{append:evaluation} for more details).
The second row of Figure~\ref{fig:results} summarizes the results for the three different evaluation schemes. We find that our method (MMASH) consistently outperforms the alternatives.
For the theoretical objective, it is about 2$\times$ better than the second best, LP. Importantly, the difference between MMASH and LP is not trifling and the least active user carries out
$2 \times 10^{-4} \times 60 \times 24 \times 30=4.3$ more actions in average over a month.
As one may have expected, GRD and LP are the best among the heuristics.
The poor performance of MINMU, which is directly related to the objective of MMASH, may be because it assigns the budget to a low active user, regardless of
their influence.
However, our method, by cleverly distributing the budget to the users whom actions trigger many other users'{} actions (like those ones with low activity), it benefits from the budget most.
In terms of simulated objective and held-out data, the algorithms'{} performance become more similar.

{\bf Least-squares activity shaping (LSASH).} We compare to two alternatives. PROP: Assigning the budget proportionally to the desired activity; LSGRD: greedily allocating budget according the
difference between current and desired activity (refer to Appendix~\ref{append:evaluation} for more details).
The third row of Figure~\ref{fig:results} summarizes the results for the three different evaluation schemes. We find that our method (LSASH) consistently outperforms the alternatives.
Perhaps surprisingly, PROP, despite its simplicity, seems to perform slightly better than LSGRD.
This is may be due to the way it allocates the budget to users, \eg, it does not aim to strictly fulfill users'{} target activity but benefit more users by assigning budget proportionally.
Refer to Appendix~\ref{append:visual} for additional experiments.

In all three tasks, longer times lead to larger differences between our method and the alternatives. This occurs because the longer the time, the more endogenous activity is triggered by network
influence, and thus our framework, which models both endogenous and exogenous events, becomes more suitable.

\subsection{Sparsity and Activity Shaping}
\label{sparsity}
In some applications there is a limitation on the number of users we can incentivize.
In our proposed framework, we can handle this requirement by including a sparsity constraint on the optimization problem.
In order to maintain the convexity of the optimization problem, we consider a $l_1$ regularization term, where a regularization parameter $\gamma$ provides the trade-off between sparsity and
the activity shaping goal:
\begin{equation}
	\label{eq:generalized-activity-maximization-sparsity}
	\begin{array}{ll}
		\mbox{maximize}_{\mub(t),\lambdab^{(0)}} & U( \mub(t)) - \gamma || \boldsymbol{\lambda}^{(0)} ||_1 \\
		\mbox{subject to} & \mub(t)=\Psib(t) \lambdab^{(0)},\quad \cbb^\top \lambdab^{(0)} \leqslant C,\quad \lambdab^{(0)} \geqslant 0
	\end{array}
\end{equation}

Tables \ref{tb:cons_ave} and  \ref{tb:minimax} demonstrate the  effect of different values of regularization parameter on \emph{capped activity maximization} and \emph{minimax activity shaping}, respectively. When $\gamma$ is small, the minimum intensity is very high. On the contrary, large values of $\gamma$ imposes large penalties on the number of non-zero intensities which results in a sparse and applicable manipulation. Furthermore, this may avoid using all the budget. When dealing with unfamiliar application domains, cross validation may help to find an appropriate trade-off between sparsity and objective function.

\begin{table} [h]
\small
\centering
\begin{tabular}{|c|c|c|c|}
 \hline
$\gamma$ & $\#$ Non-zeros  & Budget consumed  & Sum of activities \\ \hline \hline
0.5 & 2101 & 0.5 & 0.69 \\ \hline
0.6    & 1896 & 0.46 & 0.65 \\ \hline
0.7  & 1595 & 0.39 & 0.62 \\ \hline
0.8  & 951 & 0.21 & 0.58 \\ \hline
0.9  & 410 & 0.18 & 0.55 \\ \hline
1.0  & 137 & 0.13 & 0.54 \\ \hline
\end{tabular}
\caption{Sparsity properties of  capped activity maximization.}
\label{tb:cons_ave}
\end{table}

\begin{table}[h]
\small
\centering
\begin{tabular}{|c|c|c|c|}
 \hline
$\gamma(\times 10^{-3})$ & $\#$ Non-zeros  & Budget Consumed& $u_{min}(\times 10^{-3}) $ \\ \hline \hline
0.6 & 1941 & 0.49 & 0.38 \\ \hline
0.7    & 881 & 0.17 & 0.22 \\ \hline
0.8  & 783 & 0.15 & 0.21 \\ \hline
0.9  & 349 & 0.09 & 0.16 \\ \hline
1.0  & 139 & 0.06 & 0.12 \\ \hline
1.1  & 102 & 0.04 & 0.11 \\ \hline
\end{tabular}
\caption{Sparsity properties of  minimax activity shaping.}
\label{tb:minimax}
\end{table}

\subsection{Scalability}
\label{scale}

The most computationally demanding part of the proposed algorithm is the evaluation of matrix exponentials,
which we scale up by utilizing techniques from matrix algebra, such as GMRES and Al-Mohy methods. As a result, we are able to run our methods in a reasonable amount of
time on the 60K dataset, specifically, in comparison with a naive implementation of matrix exponential evaluations.
The naive implementation of the algorithm requires computing the matrix exponential once, and using it in (non-sparse huge) matrix-vector multiplications, \ie,
$$
T_{naive} = T_{\boldsymbol{\Psi}} + k  T_{prod}.
$$
Here, $ T_{\boldsymbol{\Psi}}$ is the time to compute $\boldsymbol{\Psi}(t)$, which itself comprised of three parts; matrix exponential computation, matrix inversion and matrix multiplications.
$T_{prod}$ is the time for multiplication between the large non-sparse matrix and a vector plus the time to compute the inversion via solving linear systems of equation.
Finally, $k$ is the number of gradient computations, or more generally, the number of iterations in any gradient-based iterative optimization.
The dominant factor in the naive approach is the matrix exponential. It is computationally demanding and practically inefficient for more than 7000 users.

In contrast, the proposed framework benefits from the fact that the gradient depends on $\boldsymbol{\Psi}(t)$ only through matrix-vector products. Thus, the running time of our activity shaping framework will be written as
$$
T_{our} = k  T_{grad},
$$
 where $T_{grad}$ is the time to compute the gradient which itself comprises  the time required to solve a couple of linear systems of equations and the time to compute a couple of exponential matrix-vector multiplication.

 \begin{figure} [t]
        \centering
        \begin{tabular}{cc}
          \includegraphics[width=0.35\textwidth]{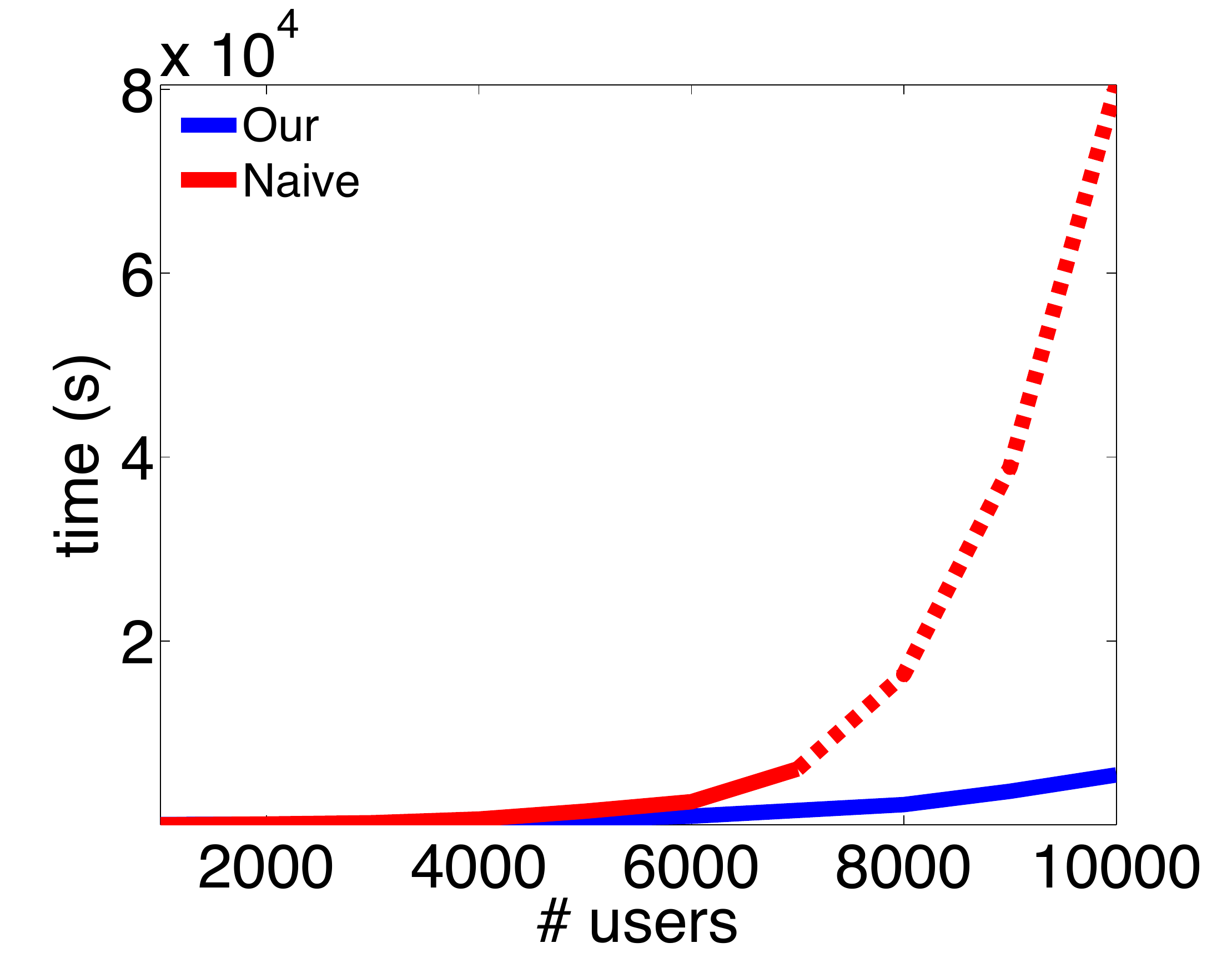}\label{fig:scale10} &
      	  \includegraphics[width=0.35\textwidth]{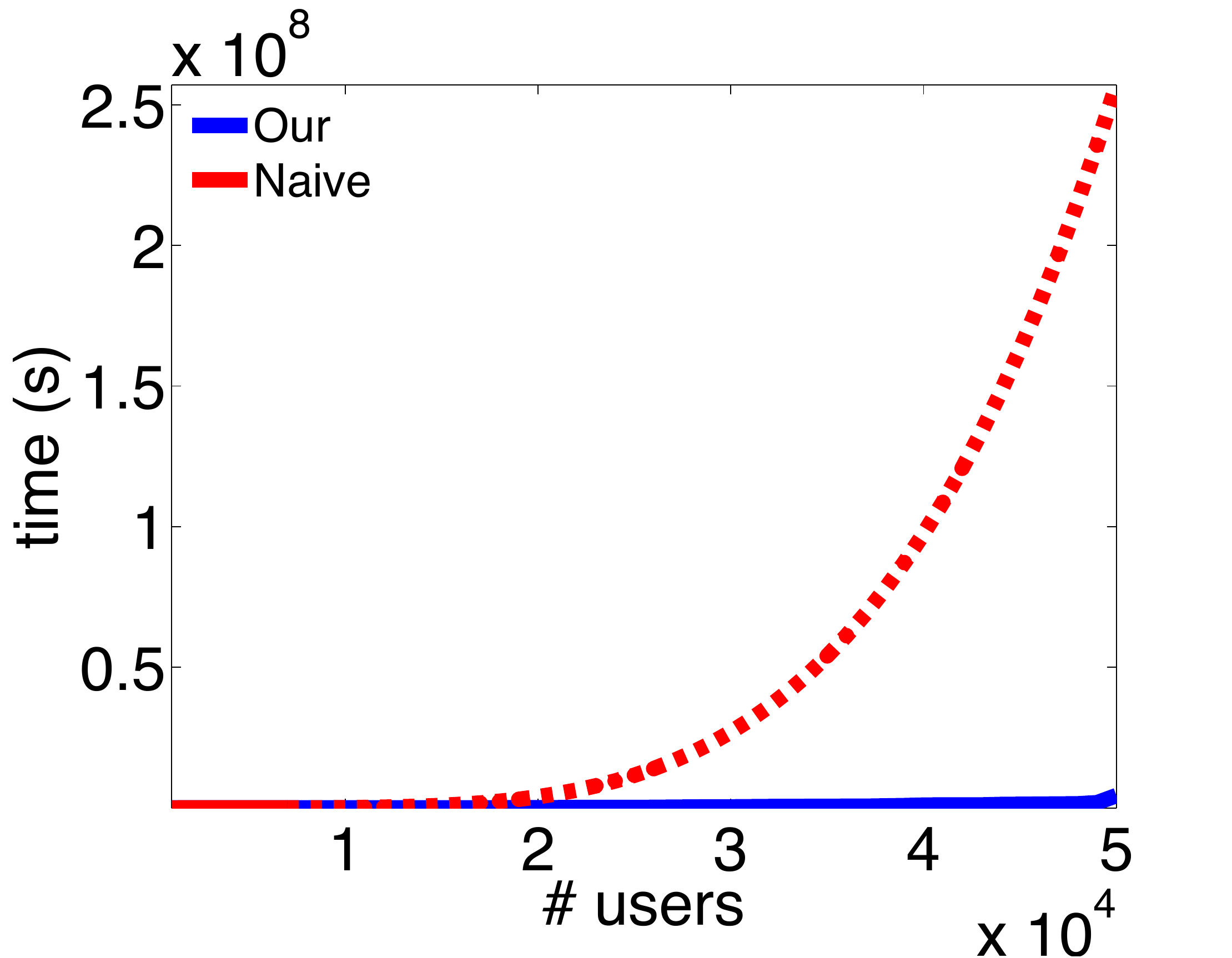}  \label{fig:scale50} \\
	  (a) For 10,000 users. &
	  (b) For 50,000 users.
	 \end{tabular}
       \caption{Scalability of least-squares activity shaping.} \label{fig:scalability}
\end{figure}

Figure \ref{fig:scalability} demonstrates $T_{our} $ and $T_{naive}$ with respect to the number of users.  For better visualization we have provided two graphs for up to 10,000 and 50,000 users, respectively.  We set $k$ equal to the number of users. Since the dominant factor in the naive computation method is matrix exponential, the choice of $k$ is not that determinant. The time for computing matrix exponential is interpolated  for more than 7000 users; and the interpolated total time, $T_{naive}$, is shown in red dashed line. These experiments are done in a machine equipped with one 2.5 GHz AMD Opteron Processor. This graph clearly shows the significance of designing an scalable algorithm.

\begin{figure*} [!t]
        \centering
          \setlength{\tabcolsep}{6pt}
        \begin{tabular}{ccc}
              \includegraphics[width=0.30\textwidth]{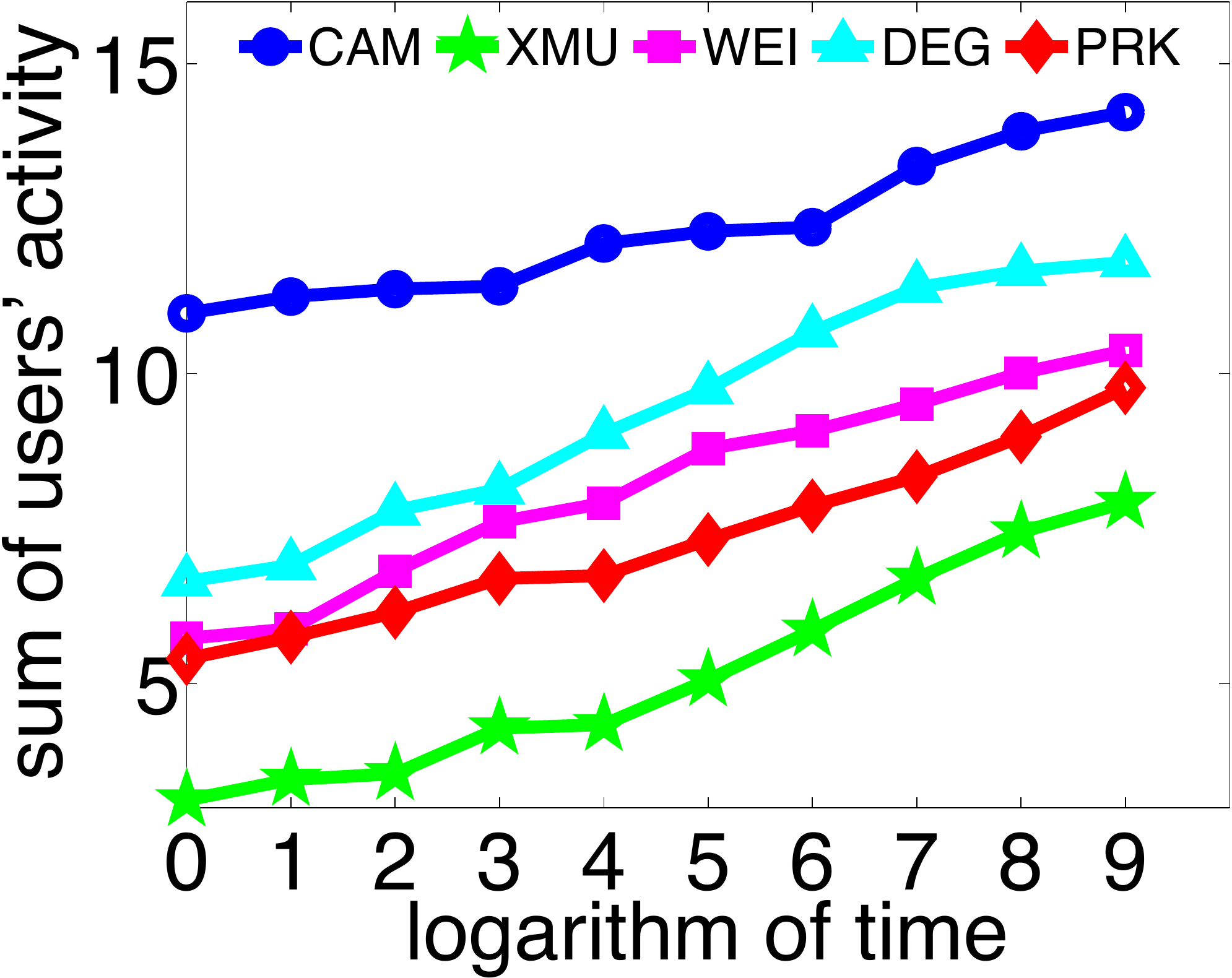} \label{fig:results:large:consave:2k} &
              \includegraphics[width=0.30\textwidth]{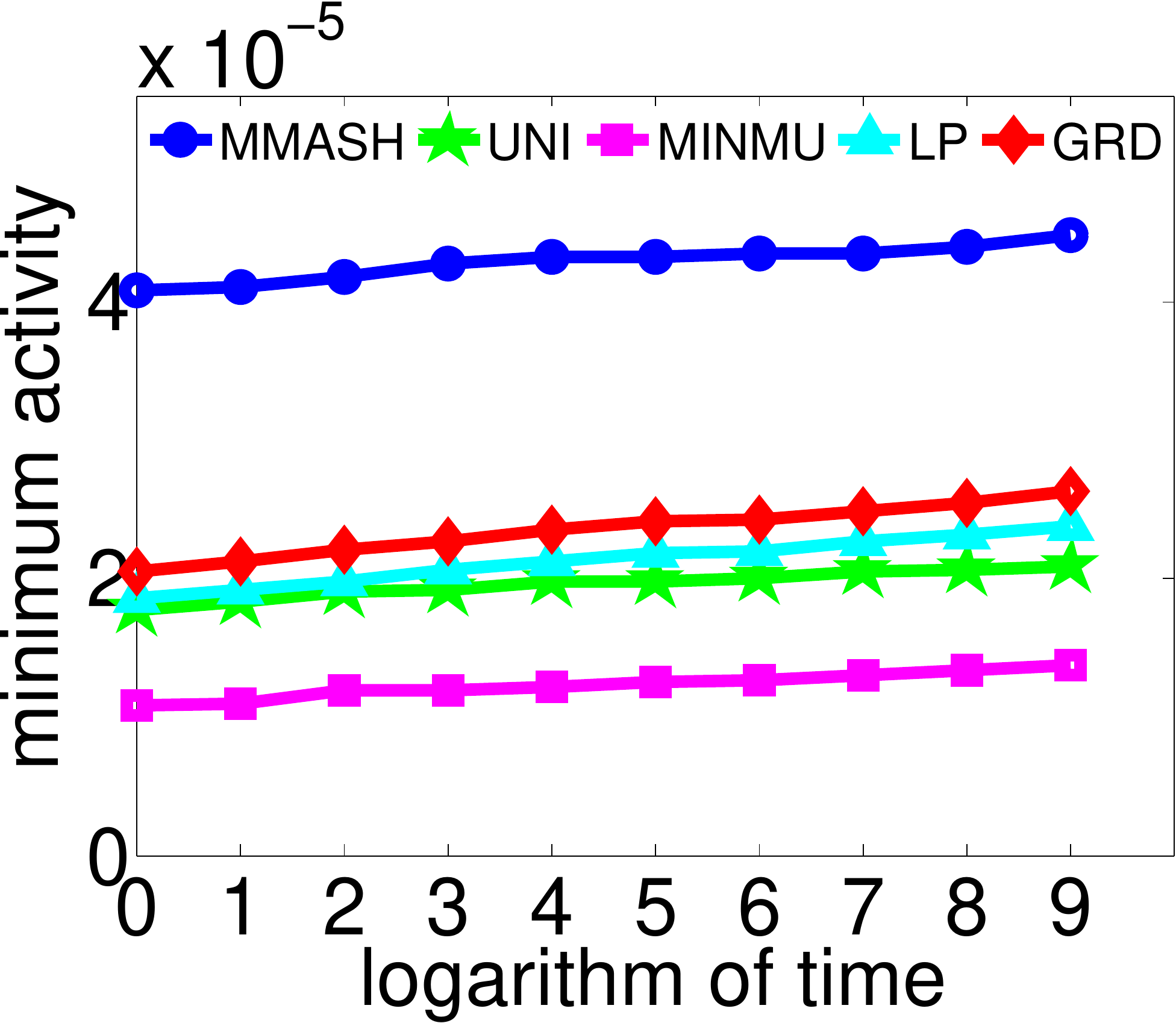}  \label{fig:results:large::minimax} &
              \includegraphics[width=0.30\textwidth]{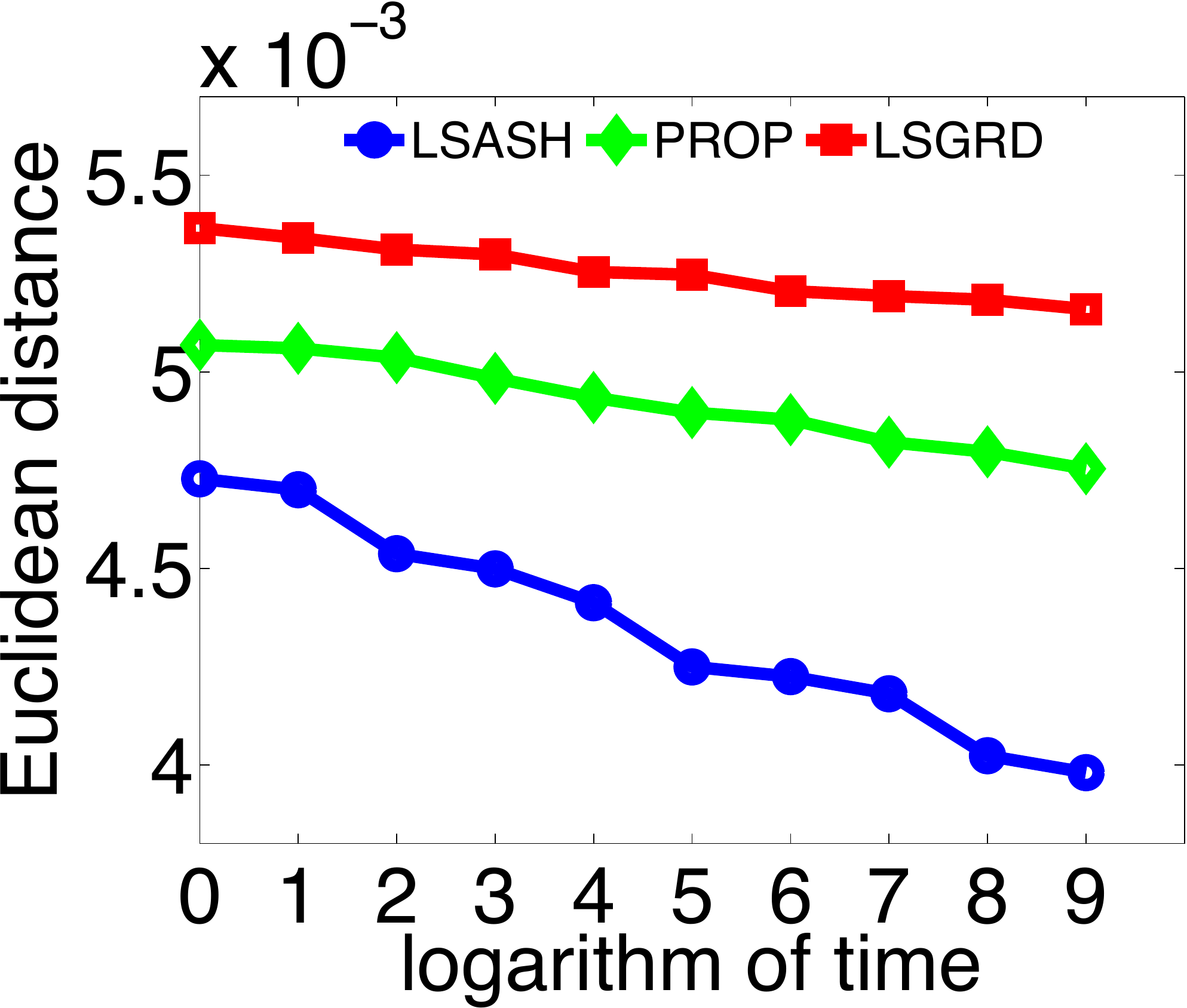}   \label{fig:results:large:lsqr} \\
                 \footnotesize
                (a) Capped activity maximization. &
                \footnotesize
                (b) Minimax activity shaping. &
                \footnotesize
                (c) Least-square activity shaping.
         \end{tabular}
         \caption{Activity shaping on the 60K dataset.}
         \label{fig:results:large}
\end{figure*}

Figure \ref{fig:results:large} shows the results of running our large-scale algorithm on the 60K dataset evaluated via theoretical objective function. We observe the same patterns as 2K dataset. Especially, the proposed method consistently outperforms the heuristic baselines. Heuristic baselines provide similar performance as for the 2K dataset. DEG shows up again as a reasonable surrogate for influence, and the poor performance of XMU on activity maximization shows that high activity does not necessarily mean being more influential. For \emph{minimax activity shaping} we observe MMASH is superior to others in $2 \times 10^{-5}$ actions per unit time, which means that the person with minimum activity uses the service $2 \times 10^{-5} \times 60 * 24 * 30 = 0.864$ times more compared to the best heuristic baseline. An increase in the activity per month of $0.864$  is not a big deal itself, however, if we consider the scale at which the network's activity is steered, we can deduce that now the service is 
guaranteeing, at least in theory, about $60000 \times 0.864 = 51840$ more adoptions monthly. As shown by the experiments on real-world held-out data, our approach for activity shaping outperforms all the considered heuristic baselines.

\section{Summary and Discussion}
In this paper, we introduced the activity shaping problem, which is a generalization of the influence maximization problem, and it allows for more elaborate goal functions. 
Our model of social activity is based on multivariate Hawkes processs, and via a connection to branching processes, we manage to derive a linear connection between the exogenous activity 
(\ie, the part that can be easily manipulated via incentives) and the overall network activity. 
This connection enables developing a convex optimization framework for activity shaping, deriving the necessary incentives to reach a global activity pattern in the network. The method is evaluated 
on both synthetic and real-world held-out data and is shown to outperform several heuristics.

We acknowledge that our method has indeed limitations.
For example, our current formulation assumes that exogenous events are constant over time. 
Thus, subsequent evolution in the point process is a mixture of endogenous and exogenous events.
However, in practice, the shaping incentives need to be doled out throughout the evolution of the process, \eg, in a sequential decision making setting. 
Perhaps surprisingly, our framework can be generalized to time-varying exogenous events, at the cost of stating some of the theoretical results in a convolution form, 
as follows:
\begin{itemize}
\item
 Lemma~\ref{lem:br_intensity} needs to be kept in convolution form, \ie, $\mub^{(k)}(t) = \Gb^{(\star k)}(t) \star  \lambdab^{(0)}(t)$.  
 The sketch of the proof is very similar, and we only need to further exploit the associativity property of the convolution at the inductive step, to prove the hypothesis holds for $k+1$:
\begin{equation}
 \mub^{(k+1)}(t) = \int_{0}^t \Gb(t-s)\, \left( \Gb^{(\star k)}(s)\,\star \, \lambdab^{(0)}(s) \right) \, ds
    = \Gb^{(\star k)}(t) \star \Gb(t) \star \lambdab^{(0)}(t)  
    = \Gb^{(\star k+1)}(t) \star  \lambdab^{(0)}(t) 
 \end{equation}
    
\item
Lemma \ref{lem:laplace} is responsible for finding a closed form for $\widehat{\Gb}^{(\star k)}(z)$ and thus is not affected by a time-varying exogenous intensity. It remains unchanged. 
\item
Theorem \ref{theo:lin_rel} derives the instantaneous average intensity $\mub(t)$ and, therefore, needs to be updated accordingly using the modified Lemma~\ref{lem:br_intensity}:
\begin{equation}
    \mub(t) = \Psib(t) \star \lambdab^{(0)}(t) = \rbr{ e^{(\boldsymbol{A}-\omega \boldsymbol{I})t} +
    \omega (\boldsymbol{A}-\omega \boldsymbol{I})^{-1} ( e^{(\boldsymbol{A}-\omega \boldsymbol{I})t} - \boldsymbol{I} ) } \star \lambdab^{(0)}(t).
\end{equation}

\end{itemize}

Many simple parametrized incentive functions, such as exponential incentives $\lambdab^{(0)}(t)=\lambdab^{(0)} \exp(-\alpha t)$ with constant decay $\alpha$ or constant incentives within a 
window $\lambdab^{(0)}(t) = \lambdab^{(0)} \Ib [t_1 < t < t_2]$, for a fixed window $[t_1, t_2]$, result in linear closed form expressions between the exogenous event intensity and the expected 
overall intensity.
Nonparameteric functions result in a non-closed form expression, however, we still benefit the fact that the mapping from $\lambdab^0(t)$ to $\mub(t)$ is linear, and hence the activity shaping
problems can still be cast as convex optimization problems. 
In this case, the optimization can still be done via functional gradient descent (or variational calculus), though with some additional challenge to tackle.

There are many other interesting venues for future work. For example, considering competing incentives, discovering the branching structure and using it explicitly to shape the activities, exploring other 
possible kernel functions or even learning them using non-parametric methods.

\section*{Acknowledgement}
The research was supported in part by NSF/NIH BIGDATA 1R01GM108341, NSF IIS-1116886, NSF CAREER IIS-1350983 and a Raytheon Faculty Fellowship to L.S.

\clearpage
\newpage

{\small
\bibliographystyle{unsrt}

}

\clearpage
\newpage

\begin{appendix}

\section{Proofs}
\label{append:proofs}
\vspace{2mm}

\setcounter{theorem}{0}

\begin{lemma}
   \label{lem:br_intensity}
  $\mub^{(k)}(t) = \Gb^{(\star k)}(t)\,  \lambdab^{(0)}$.
\end{lemma}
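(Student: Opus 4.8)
The plan is to prove the identity by induction on the generation index $k$, playing off the two recursions already available: the branching intensity recursion \eqref{eq:sum_instant_intensity} for $\lambdab^{(k)}(t)$ and the auto-convolution recursion defining $\Gb^{(\star k)}(t)$. For the base case $k=0$, I would observe that the zeroth generation consists exactly of the exogenous events, whose intensity is by assumption the constant, deterministic vector $\lambdab^{(0)}$. Hence $\mub^{(0)}(t) = \EE_{\Hcal_{t-}}[\lambdab^{(0)}] = \lambdab^{(0)} = \Ib\,\lambdab^{(0)} = \Gb^{(\star 0)}(t)\,\lambdab^{(0)}$, matching the claim since $\Gb^{(\star 0)}(t) = \Ib$.

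For the inductive step, assume $\mub^{(k-1)}(s) = \Gb^{(\star k-1)}(s)\,\lambdab^{(0)}$. First I would take expectations on both sides of the branching recursion $\lambdab^{(k)}(t) = \int_0^t \Gb(t-s)\, d\Nb^{(k-1)}(s)$. The crucial reduction is to replace $\EE[d\Nb^{(k-1)}(s)]$ by $\mub^{(k-1)}(s)\,ds$: this follows from the defining relation $\EE[d\Nb(t) | \Hcal_{t-}] = \lambdab(t)\,dt$ applied to the $(k-1)$-th generation counting process, followed by the tower property to remove the conditioning, giving $\EE[d\Nb^{(k-1)}(s)] = \EE_{\Hcal_{s-}}[\lambdab^{(k-1)}(s)]\,ds = \mub^{(k-1)}(s)\,ds$. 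Because $\Gb(t-s)$ is deterministic — it depends only on the fixed influence matrix $\Ab$ and the kernel $g$ — I can pull it outside the expectation and interchange expectation with the time integral, obtaining $\mub^{(k)}(t) = \int_0^t \Gb(t-s)\,\mub^{(k-1)}(s)\,ds$.

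Substituting the inductive hypothesis and using that $\lambdab^{(0)}$ is constant in $s$, so that it factors out of the integral, then yields $\mub^{(k)}(t) = \rbr{\int_0^t \Gb(t-s)\,\Gb^{(\star k-1)}(s)\,ds}\,\lambdab^{(0)} = \Gb^{(\star k)}(t)\,\lambdab^{(0)}$, where the last equality is precisely the definition of the auto-convolution $\Gb^{(\star k)}(t)$. This closes the induction and establishes the lemma.

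The hard part will not be the algebra, which is essentially bookkeeping, but the two analytic justifications underlying the inductive step: the interchange of expectation and time integration (a Fubini-type argument, legitimate here by the nonnegativity of $\Gb$ together with the integrability $\int_0^\infty g(s)\,ds < \infty$) and the careful application of the tower property to convert $\EE[d\Nb^{(k-1)}(s)]$ into $\mub^{(k-1)}(s)\,ds$. Once these are in place, each inductive step reduces the expected generation-$k$ intensity to a single convolution against $\Gb$, and the claimed closed form emerges directly from the recursive definition of $\Gb^{(\star k)}(t)$.
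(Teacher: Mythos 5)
Your proposal is correct and follows essentially the same route as the paper's proof: induction on the generation index, with the base case $\Gb^{(\star 0)}(t)=\Ib$ and an inductive step that converts $\EE[d\Nb^{(k-1)}(s)]$ into $\mub^{(k-1)}(s)\,ds$ via the tower property and then recognizes the resulting integral as the auto-convolution $\Gb^{(\star k)}(t)$. The only difference is cosmetic (indexing the step as $k-1\to k$ rather than $k\to k+1$, plus your added remarks on the Fubini interchange, which the paper takes for granted).
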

\begin{proof}
  We will prove the lemma by induction. For generation $k=0$, $\mub^{(0)}(t) =  \EE_{\Hcal_t}[\lambdab^{(0)}] = \Gb^{(\star 0)}(t) \lambdab^{(0)}$. Assume the relation holds for generation $k$: $\mub^{(k)}(t) = \Gb^{(\star k)}(t) \lambdab^{(0)}$. Then for generation $k+1$, we have
  $
    \mub^{(k+1)}(t) = \EE_{\Hcal_t} [\int_{0}^t \Gb(t-s)\, d\Nb^{(k)}(s)]
    = \int_{0}^t \Gb(t-s)\, \EE_{\Hcal_t} [d\Nb^{(k)}(s)].
  $
  By definition $\EE_{\Hcal_t} [d\Nb^{(k)}(s)] = \EE_{\Hcal_{s-}}[\EE[d\Nb^{(k)}(s)|\Hcal_{s-}]] = \EE_{\Hcal_{s-}} [\lambdab^{(k)}(s)\, ds] = \mub^{(k)}(s)\, ds$, then substitute it in and we have
  \begin{align*}
    \mub^{(k+1)}(t) = \int_{0}^t \Gb(t-s)\, \Gb^{(\star k)}(s)\, \lambdab^{(0)} \, ds
    = \Gb^{(\star k+1)}(t)  \lambdab^{(0)},
  \end{align*}
  which completes the proof.
\end{proof}

\begin{lemma}
   \label{lem:laplace}
  $\widehat{\Gb}^{(\star k)}(z) = \int_0^\infty \Gb^{(\star k)}(t) \, dt = \frac{1}{z}
  \cdot \frac{\Ab^k}{(z+\omega)^k}$
\end{lemma}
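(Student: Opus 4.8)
The plan is to prove the identity by induction on $k$, leaning on two ingredients: the closed form of the exponential kernel and the convolution theorem for Laplace transforms. The single observation that makes everything go through is that, for $g(t) = e^{-\omega t}$, the matrix $\Gb(t) = (a_{uu'} g(t))_{u,u'}$ separates into the constant matrix $\Ab$ times the scalar $e^{-\omega t}$, i.e. $\Gb(t) = \Ab\, e^{-\omega t}$. Consequently its Laplace transform is the scalar transform of $e^{-\omega t}$ scaling $\Ab$, namely
$$
\widehat{\Gb}(z) = \Ab \int_0^\infty e^{-zt} e^{-\omega t}\, dt = \frac{\Ab}{z+\omega},
$$
which I will use as the ``per-generation'' factor. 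Throughout I interpret $\widehat{\Gb}^{(\star k)}(z)$ as the Laplace transform $\int_0^\infty e^{-zt}\,\Gb^{(\star k)}(t)\, dt$ of the auto-convolution matrix.

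First I would settle the base case $k=0$. Since $\Gb^{(\star 0)}(t) = \Ib$, its transform is $\widehat{\Gb}^{(\star 0)}(z) = \Ib \int_0^\infty e^{-zt}\, dt = \tfrac{1}{z}\Ib$, which is exactly $\tfrac{1}{z}\cdot \tfrac{\Ab^0}{(z+\omega)^0}$ for $\mathrm{Re}(z) > 0$. For the inductive step I would use the defining recursion $\Gb^{(\star k)}(t) = \Gb(t) \star \Gb^{(\star k-1)}(t)$ and apply the convolution theorem, which turns a convolution in $t$ into an (ordered) product of transforms in $z$:
$$
\widehat{\Gb}^{(\star k)}(z) = \widehat{\Gb}(z)\,\widehat{\Gb}^{(\star k-1)}(z) = \frac{\Ab}{z+\omega}\cdot \frac{1}{z}\,\frac{\Ab^{k-1}}{(z+\omega)^{k-1}} = \frac{1}{z}\,\frac{\Ab^k}{(z+\omega)^k},
$$
where the middle equality inserts the inductive hypothesis. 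The matrix ordering is harmless because both factors are polynomials in the single matrix $\Ab$ and therefore commute, so the numerator collapses cleanly to $\Ab^k$; this closes the induction.

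The hard part will be justifying the two analytic facts that the above manipulation quietly assumes: the matrix-valued convolution theorem and the convergence of the defining integral. For the former I would argue entrywise, writing the $(u,v)$ entry of the convolution as the finite sum of scalar convolutions $\sum_w [\Gb]_{uw} \star [\Gb^{(\star k-1)}]_{wv}$, applying the ordinary scalar convolution theorem to each term, and reassembling the results into the matrix product $\widehat{\Gb}(z)\,\widehat{\Gb}^{(\star k-1)}(z)$. For convergence, unrolling the convolutions shows that for each fixed $k$ the function $\Gb^{(\star k)}(t)$ is a fixed combination of a constant matrix and terms of the form $t^j e^{-\omega t}$, hence bounded in $t$; the Laplace integral therefore converges absolutely for $\mathrm{Re}(z) > 0$, and the resulting rational function of $z$ is the analytic object that Theorem~\ref{theo:lin_rel} later sums over $k$.
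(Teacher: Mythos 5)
Your proof is correct and follows essentially the same route as the paper's: induction on $k$, with base case $\widehat{\Gb}^{(\star 0)}(z)=\tfrac{1}{z}\Ib$ and the inductive step reducing the convolution to the product $\widehat{\Gb}(z)\,\widehat{\Gb}^{(\star k-1)}(z)=\tfrac{\Ab}{z+\omega}\cdot\tfrac{1}{z}\tfrac{\Ab^{k-1}}{(z+\omega)^{k-1}}$. Your additional entrywise justification of the matrix convolution theorem and the convergence check are sound extras the paper leaves implicit.
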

\begin{proof}
  We will prove the result by induction on $k$. First, given our choice of exponential kernel, $\Gb(t) = e^{-\omega t}\boldsymbol{A}$, we have that $\widehat{\Gb}(z) = \frac{1}{z+w} \boldsymbol{A}$. Then for $k=0$, $\Gb^{(\star 0)}(t) = \Ib$ and
  $
    \widehat{\Ib}(z) =\int_0^\infty e^{-zt} \Ib\, dt = \frac{1}{z} \Ib.
  $
  Now assume the result hold for a general $k-1$, then $\widehat{\Gb}^{(\star k-1)}(z)=\frac{1}{z} \cdot \frac{\Ab^{k-1}}{(z+\omega)^{k-1}}$. Next, for $k$, we have
  $\widehat{\Gb}^{(\star k)}(z) = \int_0^\infty e^{-zt} \rbr{\Gb(t) \star \Gb^{(\star k-1)}(t)} dt = \widehat{\Gb}(z) \widehat{\Gb}^{(\star k-1)}(z)$, which is
  $
    \rbr{\frac{1}{z+\omega} \Ab} \rbr{\frac{\Ab^{k-1}}{(z+\omega)^{k-1}} \cdot \frac{1}{z}} = \frac{1}{z} \cdot \frac{\Ab^k}{(z+\omega)^k},
  $
  and completes the proof.
\end{proof}

\begin{theorem}
   \label{theo:lin_rel}
  $
    \mub(t) = \Psib(t) \lambdab^{(0)} = \rbr{ e^{(\boldsymbol{A}-\omega \boldsymbol{I})t} +
    \omega (\boldsymbol{A}-\omega \boldsymbol{I})^{-1} ( e^{(\boldsymbol{A}-\omega \boldsymbol{I})t} - \boldsymbol{I} ) } \lambdab^{(0)}.
  $
\end{theorem}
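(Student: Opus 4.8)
The plan is to reduce the statement to a closed-form evaluation of the matrix series $\Psib(t) = \Ib + \sum_{k\geq 1}\Gb^{(\star k)}(t)$ and then recover $\mub(t)$ by linearity. Combining the generation decomposition $\mub(t) = \sum_{k\geq 0}\mub^{(k)}(t)$ with Lemma~\ref{lem:br_intensity}, each generation contributes $\mub^{(k)}(t) = \Gb^{(\star k)}(t)\lambdab^{(0)}$, so pulling the common factor $\lambdab^{(0)}$ out of the sum gives $\mub(t) = \big(\sum_{k\geq 0}\Gb^{(\star k)}(t)\big)\lambdab^{(0)} = \Psib(t)\lambdab^{(0)}$. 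It therefore suffices to show that $\Psib(t)$ equals the bracketed matrix in the statement.

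To evaluate the series I would pass to the Laplace domain, where the auto-convolutions become products. Taking termwise transforms and applying Lemma~\ref{lem:laplace}, $\widehat{\Psib}(z) = \sum_{k\geq 0}\widehat{\Gb}^{(\star k)}(z) = \frac{1}{z}\sum_{k\geq 0}\big(\Ab/(z+\omega)\big)^k$. For $\Re(z)$ large enough the spectral radius of $\Ab/(z+\omega)$ is below one, so the matrix geometric series sums to $\big(\Ib - \Ab/(z+\omega)\big)^{-1}$, whence $\widehat{\Psib}(z) = \frac{z+\omega}{z}\big((z+\omega)\Ib - \Ab\big)^{-1}$. Setting $\boldsymbol{M} := \Ab - \omega\Ib$, this reads $\widehat{\Psib}(z) = \frac{z+\omega}{z}(z\Ib - \boldsymbol{M})^{-1}$, and the split $\frac{z+\omega}{z} = 1 + \frac{\omega}{z}$ gives $\widehat{\Psib}(z) = (z\Ib - \boldsymbol{M})^{-1} + \frac{\omega}{z}(z\Ib - \boldsymbol{M})^{-1}$.

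The remaining step inverts each piece. The first term is the Laplace transform of $e^{\boldsymbol{M}t}$. For the second, since $1/z$ is the transform of the unit step, the convolution theorem gives the inverse $\omega\int_0^t e^{\boldsymbol{M}s}\,ds = \omega\,\boldsymbol{M}^{-1}(e^{\boldsymbol{M}t} - \Ib)$. Re-substituting $\boldsymbol{M} = \Ab - \omega\Ib$ reproduces $e^{(\Ab-\omega\Ib)t} + \omega(\Ab-\omega\Ib)^{-1}(e^{(\Ab-\omega\Ib)t} - \Ib)$, i.e.\ exactly the claimed form of $\Psib(t)$. One may avoid the integral via the partial-fraction identity $\frac{1}{z}(z\Ib - \boldsymbol{M})^{-1} = \boldsymbol{M}^{-1}\big[(z\Ib - \boldsymbol{M})^{-1} - \frac{1}{z}\Ib\big]$ and invert term by term, which yields the same answer.

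I expect the genuine obstacles to be analytic rather than algebraic. First, both the termwise transform and the termwise inversion must be justified through convergence of the matrix geometric series; this is immediate for $\Re(z)$ large and is then extended by analytic continuation. Second, the closed form presupposes that $\boldsymbol{M} = \Ab - \omega\Ib$ is invertible, i.e.\ that $\omega$ is not an eigenvalue of $\Ab$; this holds in the stationary regime where the spectral radius of $\Ab/\omega$ is below one, and the degenerate case would require a separate limiting argument. Those two points are where I would concentrate the rigor, since everything else is routine manipulation specific to the exponential kernel.
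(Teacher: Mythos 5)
Your proposal is correct and follows essentially the same route as the paper: pass to the Laplace domain, apply Lemma~\ref{lem:laplace} termwise, split $\tfrac{z+\omega}{z}=1+\tfrac{\omega}{z}$, and invert the two pieces into $e^{(\Ab-\omega\Ib)t}$ and its time integral. The only cosmetic difference is that you sum the geometric series into the resolvent $(z\Ib-(\Ab-\omega\Ib))^{-1}$ explicitly, while the paper keeps it as a shifted power series $\widehat{\Fb}(z+\omega)$ with $\Fb(t)=e^{\Ab t}$; your added remarks on convergence and on the invertibility of $\Ab-\omega\Ib$ are sound and slightly more careful than the paper's own treatment.
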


\begin{proof}
  We first compute the Laplace transform $\widehat{\Psib}(z):=\int_0^\infty e^{-z t}\Psib(t)\, dt$. Using lemma~\ref{lem:laplace}, we have
  $$
    \widehat{\Psib}(z) = \frac{1}{z} \sum_{i=0}^{\infty} \frac{\boldsymbol{A}^i}{(z+w)^i} = \frac{(z+w)}{z} \sum_{i=0}^{\infty} \frac{\boldsymbol{A}^{i+1}}{(z+w)^i}
  $$
  Second, let $\widehat{\Fb}(z) := \sum_{i=0}^{\infty} \frac{\boldsymbol{A}^i}{z^{i+1}}  $ and its inverse Laplace transform be $\Fb(t) = \int_0^\infty e^{zt} \widehat{\Fb}(z)\,dz =  \sum_{i=0}^{\infty} \frac{(\Ab t)^i}{i!} = e^{\Ab t}$, where $e^{\Ab t}$ is a matrix exponential.
  Then, it is easy to see that
  $
    \widehat{\Psib}(z)  = \frac{(z+w)}{z} \widehat{\boldsymbol{F}} (z+w) = \widehat{\boldsymbol{F}} (z+w) + \frac{w}{z} \widehat{\boldsymbol{F}} (z+w).
  $
  Finally, we perform inverse Laplace transform for $\widehat{\Psib}(z)$, and obtain
  $
  \Psib(t) = e^{(\boldsymbol{A}-\omega \boldsymbol{I})t} +  \omega \int_0^t e^{(\boldsymbol{A}-\omega \boldsymbol{I})s} ds
  = e^{(\boldsymbol{A}-\omega \boldsymbol{I})t} +
  \omega (\boldsymbol{A}-\omega \boldsymbol{I})^{-1} \left(  e^{(\boldsymbol{A}-\omega \boldsymbol{I})t} - \boldsymbol{I} \right),
  $
  where we made use of the property of Laplace transform that dividing by $z$ in the frequency domain is equal to an integration in time domain, and
  ${F} (z+w)= e^{-\omega t}e^{\boldsymbol{A}t} = e^{(\boldsymbol{A}-\omega \boldsymbol{I})t}$.
\end{proof}

\begin{corollary}
  $\mub =  \left(\mathbf{I} - \boldsymbol{\Gamma}\right)^{-1} \lambdab^{(0)} = \lim_{t \rightarrow \infty} \Psib(t) \, \lambdab^{(0)}$.
\end{corollary}
\begin{proof}
  If the process is stationary, the spectral radius of $\boldsymbol{\Gamma} = \frac{\boldsymbol{A}}{w}$ is smaller than 1, which implies that all eigenvalues of $\Ab$ are smaller than $\omega$ in magnitude.
  Thus, all eigenvalues of $\Ab - \omega \Ib$ are negative.
  Let $\Pb \Db \Pb^{-1}$ be the eigenvalue decomposition of $\Ab - \omega \Ib$, and all the elements (in diagonal) of $\Db$ are negative. Then based on the property of matrix exponential, we have $e^{(\boldsymbol{A}-\omega \boldsymbol{I})t} = \Pb e^{\Db t} \Pb^{-1}$.
  As we let $t \rightarrow \infty$, the matrix $e^{\Db t}\rightarrow \zero$ and hence $e^{(\boldsymbol{A}-\omega \boldsymbol{I})t} \rightarrow \zero$. Thus
  $
    \lim_{t \rightarrow \infty}  \Psib(t) =  -\omega (\boldsymbol{A}-\omega \boldsymbol{I})^{-1},
  $
  which is equal to $(\Ib - \boldsymbol{\Gamma})^{-1}$, and completes the proof.
\end{proof}

\section{More on Experimental Setup}
\label{append:evaluation}
Table~\ref{tab:URLuses} shows the number of adopters and usages for the six different URL shortening services.
\begin{table} [t]
\small
\centering
\begin{tabular}{|c|c|c|}
 \hline
Service & $\#$ adopters & $\#$ usages \\ \hline \hline
Bitly & $55{,}883$ & $5{,}046{,}710$ \\ \hline
TinyURL & $46{,}577$ & $1{,}682{,}459$ \\ \hline
Isgd & $28{,}050$ & $596{,}895$ \\ \hline
TwURL & $15{,}215$ & $197{,}568$ \\ \hline
SnURL & $4{,}462$ & $41{,}823$ \\ \hline
Doiop & $88$ & $643$ \\ \hline
\end{tabular}
\caption{$\#$ of adopters and usages for each URL shortening service.}
\label{tab:URLuses}
\end{table}
It includes a total of 7,566,098 events (adoptions) during the 8-month period.

In the following, we describe the considered baselines proposed to compare to our approach for i) the capped activity maximization; ii) the minimax activity shaping; and iii) the least-squares activity shaping problems.

For \textbf{\emph{capped activity maximization}} problem, we consider the following four heuristic baselines: 

 \begin{itemize}

 \item

XMU allocates the budget based on users' current activity. In particular, it assigns the budget to each of the half top-most active users proportional to their average activity, $\boldsymbol{\mu}(t)$, computed from the inferred parameters.

\item

WEI  assigns positive budget to the users proportionally to their sum of out-going influence  ($\sum_u a_{uu'}$). This heuristic allows us (by comparing its results to CAM) to understand the effect of considering the whole network with respect to only consider the direct (out-going) influence.

\item

 DEG assumes that more central users, \ie, more connected users, can leverage the total activity, therefore, assigns the budget to the more connected users proportional to their degree in the network.

\item

PRK sorts the users according to the their pagerank in the weighted influence network ($\Ab$) with the damping factor set to $0.85 \%$, and assigns the budget to the top users proportional their pagerank value.

 \end{itemize}

In order to show how network structure leverages the \textbf{\emph{minimax activity shaping}} we implement following four baselines: 

\begin{itemize}

\item

UNI allocates the total budget equally to all users.

\item

MINMU divides uniformly the total budget among half of the users with lower average activity $\boldsymbol{\mu}(t)$, which is computed from the inferred parameters.

\item

LP finds the top half of least-active users in the current network and allocates the budget such that after the assignment the network has the highest minimum activity possible. This method uses linear programming to learn exogenous activity of the users, but, in contrast to the proposed method, does not consider the network and propagation of adoptions.

\item

GRD  finds the user with minimum activity, assigns a portion of the budget, and computes the resulting $\boldsymbol{\mu}(t)$. It then repeats the process to incentivize half of users.

 \end{itemize}

We compare \textbf{\emph{least-square activity shaping}} with the following baselines:

 \begin{itemize}

 \item

PROP shapes the activity by allocating the budget proportional to the desired shape, \ie, the shape of the assignment is similar to the target shape.

 \item

 LSGRD greedily finds the user with the highest distance between her current and target activity, assigns her a budget to reach her target, and proceeds this way to consume the whole budget.

 \end{itemize}

Each baseline relies on a specific property to allocate the budget (\eg~connectedness in DEG). However, most of them face two problems: The first one is how many users to incentivize and the second one is how much should be paid to the selected users. They usually rely on heuristics to reveal these two problems (\eg~allocating an amount proportional to that property and/or to the top half users sorted based on the specific property). In contrast, our framework  is comprehensive enough to address those difficulties based on well-developed theoretical basis.  This key factor accompanied with the appropriate properties of Hawkes process for modeling social influence (\eg~mutually exciting) make the proposed method the best.

\clearpage
\newpage

\section{Temporal Properties}
\label{append:temp}
For the experiments on simulated objective function and held-out data we have estimated intensity from the events data. In this section, we will see how this empirical intensity resembles the theoretical intensity.
We generate a synthetic network over 100 users. For each user in the generated network,  we uniformly sample from $[0, 0.1]$ the exogenous intensity, and the endogenous parameters $a_{uu'}$ are uniformly sampled from $[0,0.1]$.  A bandwidth $\omega =1$ is used in the exponential kernel. Then,  the intensity is estimated empirically by dividing the number of events by the length of the respective interval. 

We compute the mean and variance of the empirical activity for 100 independent runs. As illustrated in  Figure \ref{fig:convergence}, the average empirical intensity (the blue curve) clearly follows the theoretical instantaneous intensity (the red curve) but, as expected, as we are further from the starting point (\ie, as time increases), the standard deviation of the estimates (shown in the whiskers) increases. Additionally, the green line shows the average stationary intensity. As it is expected, the instantaneous intensity tends to the stationary value when the network has been run for sufficient long time.

\begin{figure}[h]
\centering
                \includegraphics[width=0.33\columnwidth]{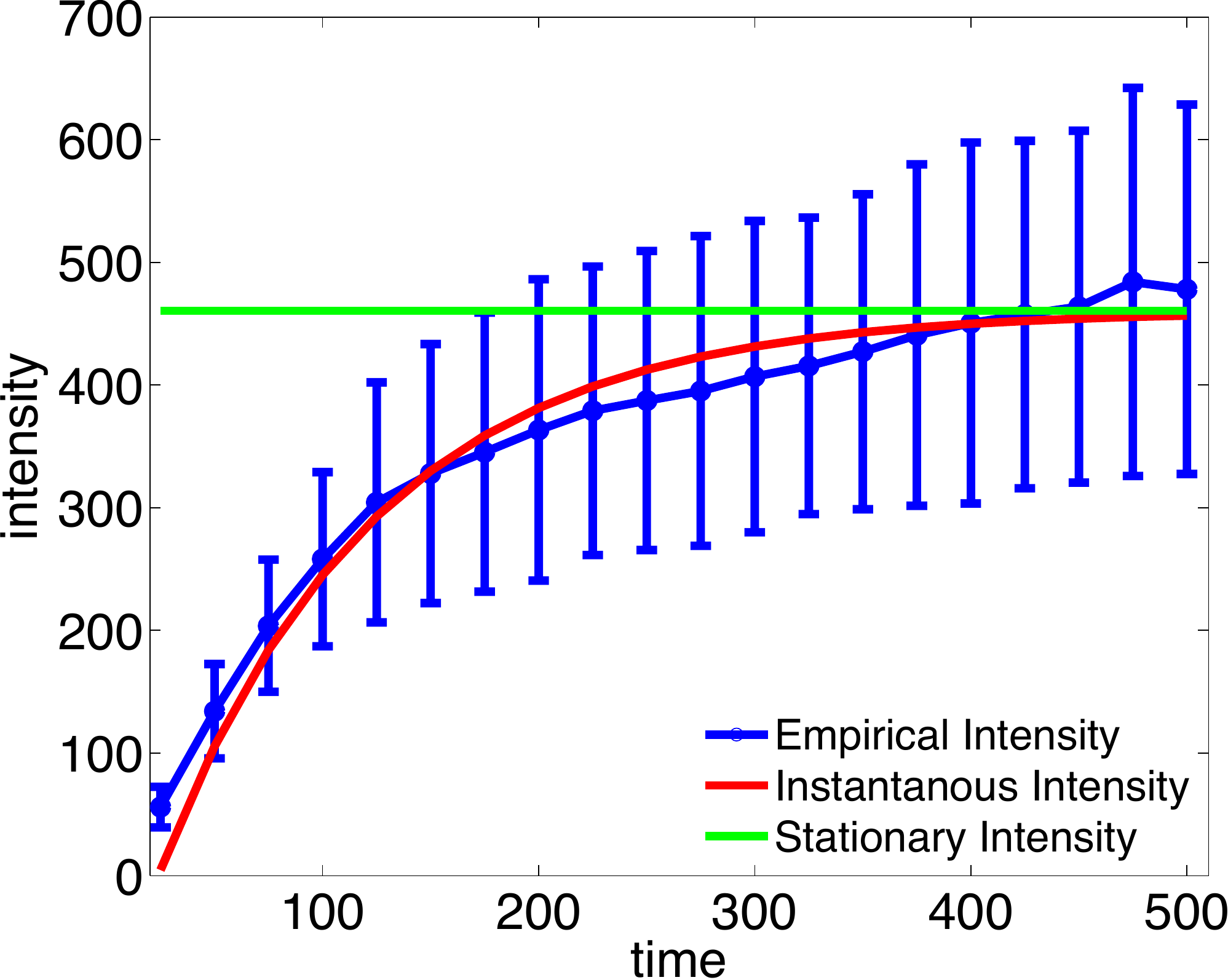}
               \caption{Evolution in time of empirical and theoretical intensity.}
                \label{fig:convergence}
\end{figure}

\section{Visualization of Least-squares Activity Shaping }

To get a better insight on the the activity shaping problem we visualize the \emph{least-squares activity shaping} results for the 2K and 60K datasets. Figure \ref{fig:results5} shows the result of activity shaping at $t=1$ targeting the same shape as in the experiments section. The red line is the target shape of the activity and the blue curve correspond to the activity profiles of users after incentivizing computed via theoretical objective. It is clear that the resulted activity behavior resembles the target shape.

\label{append:visual}
\begin{figure*} [h]
        \centering
        \begin{tabular}{cc}
                \includegraphics[width=0.33\textwidth]{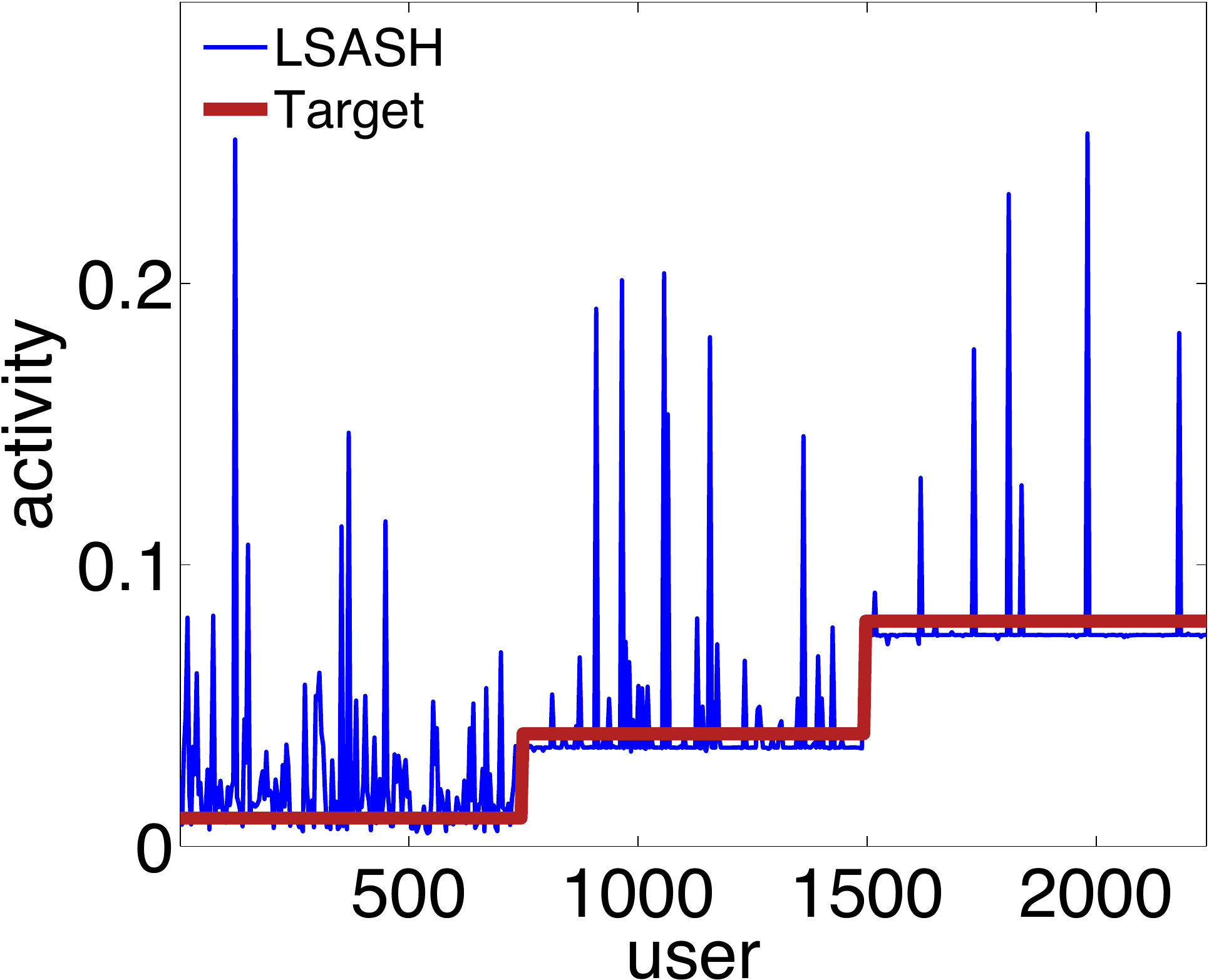} \label{fig:results5:lsq2k} &
                \includegraphics[width=0.33\textwidth]{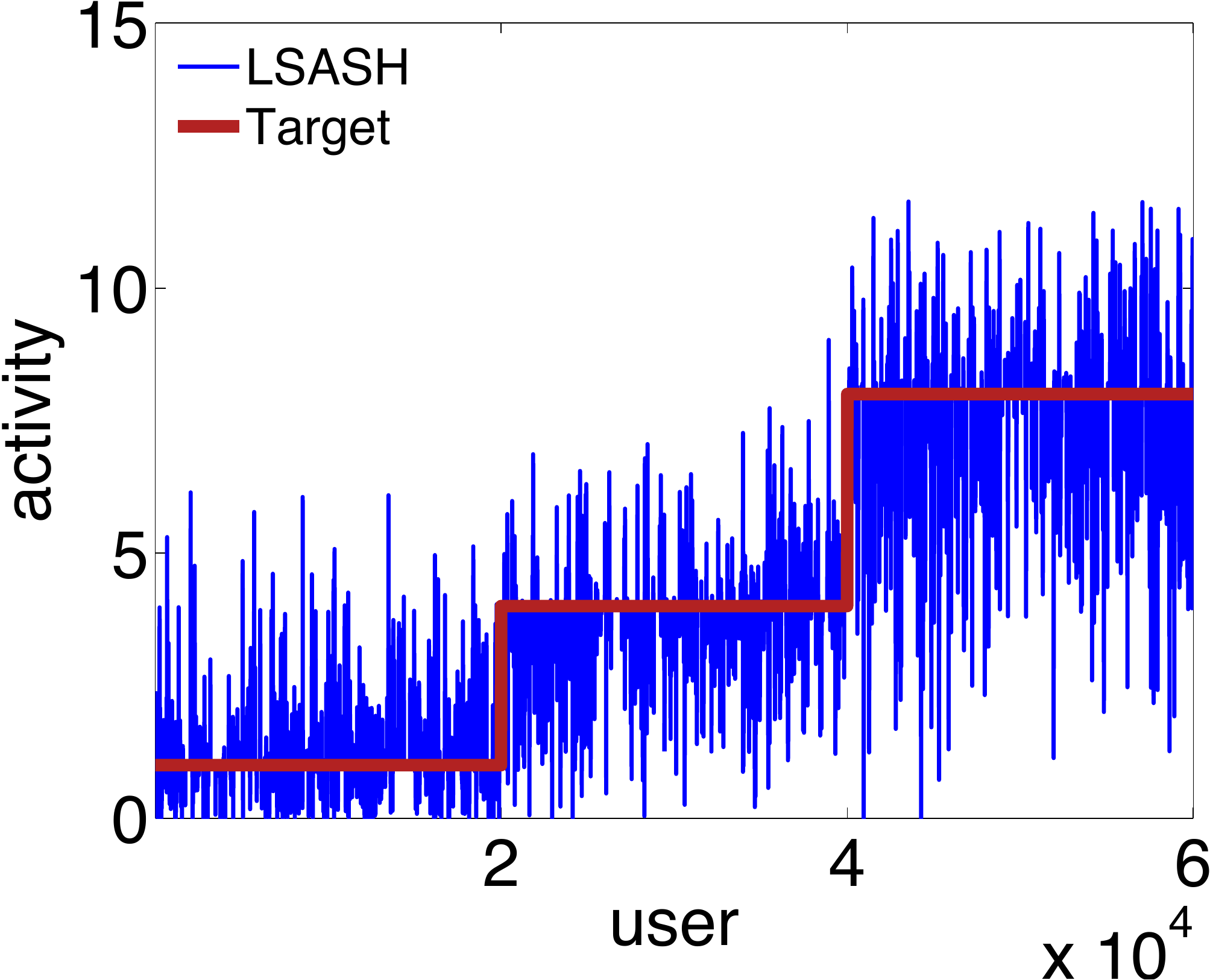} \label{fig:results5:lsq60k} \\
         	(a) Capped activity maximization. &
		(b) Minimax activity shaping.
	\end{tabular}
        \caption{Activity shaping results.}\label{fig:results5}
\end{figure*}

\end{appendix}

\end{document}